\renewcommand{\baselinestretch}{1.8}
\def\x{{\mathbf x}}
\def\t{\mr{T}}
\def\diag{{\rm diag}\,}
\def\Exp{{\mathbb{E}}\,}
\def\tr{{\rm tr}\,}
\def\diag{{\rm diag}\,}
\def\be{\begin{equation}}
\def\ee{\end{equation}}
\def\ba{\left[\begin{array}}
\def\ea{\end{array}\right]}
\def\bea{\begin{eqnarray}}
\def\eea{\end{eqnarray}}
\newcommand{\mb}[1]{\mathbf{#1}}
\newcommand{\mr}[1]{\mathrm{#1}}
\newcommand{\mc}[1]{\mathcal{#1}}
\newcommand{\ol}[1]{\overline{#1}}
\def\ba{{\bf a}}
\def\vec{\mr{vec}}
\newtheorem{theorem}{\textbf{Theorem}}
\newtheorem{lemma}{\textbf{Lemma}}
\begin{document}
\title{Maximum-rate Transmission
with Improved Diversity Gain for Interference Networks
}
%
\date{}
\author{Liangbin Li, Hamid Jafarkhani}
\affil{Center for Pervasive Communications \& Computing, University
of California, Irvine\thanks{ This work was supported in part by the
NSF award CCF-0963925. Part of this work was presented at IEEE
International Symposium on Information Theory (ISIT) 2011. }} 
%
\maketitle

\begin{abstract}
Interference alignment (IA) was shown effective for interference
management to improve transmission rate in terms of the degree of
freedom (DoF) gain. On the other hand, orthogonal space-time block
codes (STBCs) were widely used in point-to-point multi-antenna
channels to enhance transmission reliability in terms of the
diversity gain. In this paper, we connect these two ideas, i.e., IA
and space-time block coding, to improve the designs of alignment
precoders for multi-user networks. Specifically, we consider the use
of Alamouti codes for IA because of its rate-one transmission and
achievability of full diversity in point-to-point systems. The
Alamouti codes protect the desired link by introducing orthogonality
between the two symbols in one Alamouti codeword, and create
alignment at the interfering receiver. We show that the proposed
alignment methods can maintain the maximum DoF gain and improve the
ergodic mutual information in \emph{the long-term regime}, while
increasing the diversity gain to $2$ in \emph{the short-term
regime}. The presented examples of interference networks have two
antennas at each node and include the two-user X channel, the
interferring multi-access channel (IMAC), and the interferring
broadcast channel (IBC).

\end{abstract}

\section{Introduction}
Interference plays a major role in open air network communication
and interference management is crucial for future wireless network
designs. Recent research shows much interest in a technique called
interference alignment (IA) that enhances network throughput in
terms of the degree of freedom (DoF) gain (or equivalently the
multiplexing gain). Through the control of either spatial transmit
beamformers \cite{CaJa08,JaSh08,MaTse10} or temporal correlation
patterns\cite{WaGoJa}, interference casts overlapping shadows in the
receive signal space at unintended receivers. Such control minimizes
the dimensions of interference while keeping useful signals
discernable at receivers. The technique is the key to achieve the
maximum DoF gain in interference channels\cite{CaJa08}, X
channels\cite{JaSh08,CaJa09}, and broadcast
channels\cite{WaGoJa,MaTse10} at the cost of simple linear
processing for transmitters and receivers.

In addition to network throughput, reliability in terms of the
diversity gain is another performance metric. When channels are in
deep fading, the signal-to-noise (SNR) level at the receiver is low
and systems cannot support specified transmission rate, which
consequently results in outage events with finite diversity gain.
Various techniques have been intensively studied to improve the
spatial diversity gain, e.g., Alamouti codes\cite{Alamouti98},
space-time block codes (STBCs)\cite{tar99,tar99_2}, and beamforming
methods for point-to-point multi-input multi-output (MIMO) channels;
the interference cancellation (IC) method for multi-access channels
(MACs) \cite{NaSeCa,KaJa}; and the downlink IC method for broadcast
channels (BCs)\cite{Li12}. Conceptually, the DoF gain and the
diversity gain demonstrate different dimensions of performance
metrics in high SNR. The DoF gain reflects the long-term
performance, where systems can have ergodic power constraints (e.g.,
use a Gaussian codebook that has infinite peak power) and
infinite-length channel coding against noise corruption. When the
system has perfect channel state information at the transmitter
(CSIT), rate adaption can be performed with infinite sets of
codebooks. The rate can be instantaneously zero when channels are in
deep fading, or grow linearly with $\log \mr{SNR}$ to boost the
transmission rate\cite{dtse}. A system pursuing the DoF gain
operates in \emph{the long-term regime}. With long-term constraints
on power, decoding delay, and rate, channel outage can be avoided by
choosing a codebook with a rate lower than the instantaneous
capacity. On the other hand, the diversity gain reflects the
short-term performance, where systems have constraints on power,
decoding delay, and rates for a finite number of fading blocks
(e.g., a delay-limited system). With a non-zero minimum rate
constraint, channel outages cannot be avoided and are dominated by
finite diversity gain, although power allocation and rate adaption
can be performed within the constrained blocks\cite{short_report}. A
system pursuing the diversity gain operates in \emph{the short-term
regime}. Both metrics are of equal importance for communication
system designs. We are particularly interested in the spatial
diversity gain, which can be straightforwardly combined with other
forms of diversity, e.g., frequency diversity and time diversity.
The existing alignment methods in \cite{CaJa08, JaSh08}, although
achieve the maximum DoF gain, provide only a spatial diversity gain
of 1 in the short-term regime\cite{Sezgin09}. In this paper, we aim
at improving the diversity gain without losing the maximum DoF gain.

The main idea conceived by STBCs with orthogonal designs is the
orthogonality between embedded symbols\cite{tar99}. The
orthogonality guarantees no SNR loss at the receiver if the
zero-forcing (ZF) method is used to decouple symbols in one block.
The improvement holds for any SNRs. Consequently, full-diversity is
achieved as long as the block code has full-rank. We adopt this idea
into the linear alignment design to protect the desired channels.
While the previous alignment methods only focus on linear IA at
unintended receivers without considering the desired channels, our
proposed method uses STBC to enhance the reliability of desired
channels without affecting alignment at interferring receivers. This
explains the diversity improvement obtained by the proposed methods.
Specifically, since Alamouti code is the only complex orthogonal
design that can achieve rate-one (the maximum possible rate for
orthogonal designs)\cite{tar99}, we embed Alamouti codes into
alignment designs. Alamouti code also has another nice property that
its $2\times 2$ matrix structure is closed under matrix
multiplication and addition. This property is utilized for the IC
method in MACs such that the Alamouti structure of the equivalent
channel matrix is preserved after cancelling the interfering
users\cite{NaSeCa}. Enlightened by these facts, we propose new
alignment methods using Alamouti codes.

We motivate the idea in a double-antenna $2\times 2$ X channel,
where two transmitters send symbols to each of the two receivers.
The maximum DoF gain of such a network is known to be
$\frac{4}{3}\times 2=\frac{8}{3}$\cite{JaSh08}, achievable by symbol
extensions over three channel uses and sending two symbols over each
communication direction. Since each transmitter has two antennas and
only two symbols are sent to each receiver, we propose to convey
these two symbols in a block with Alamouti structure. Alignment at
interferring receivers is achieved on an equivalent channel matrix
with Alamouti structure. Therefore, the two symbols of the same user
are orthogonal to each other and decoupling them does not incur SNR
loss. Consequently, the maximum transmit diversity gain is obtained.
The contributions of this paper are summarized as
\begin{enumerate}
  \item In the two-user double-antenna X channel,
  compared to the linear alignment method in \cite{JaSh08},
  our proposed scheme achieves higher diversity gain, i.e., a
  diversity gain of $2$ at the same DoF gain $\frac{8}{3}$. Our proposed method only requires local CSIT instead of global
  CSIT as assumed in \cite{JaSh08}. In other words, each transmitter only needs to know the channel
  information from itself to both receivers.
  \item The proposed method can be extended with the same
diversity gain improvement to
  cellular networks such as the interferring MAC (IMAC) and the interferring BC (IBC) \cite{Suh08}, where inter-cell interference affects desired communication. The mobile stations
(MSs) only require local channel information. Since the IMAC and the
IBC are dual to each other, we use the idea of duality\cite{ViTse03,
ViJinGo03,Li12} to transform the alignment solution in the IMAC to
the solution in the IBC. Simulation shows significant bit error rate
(BER) performance improvement compared to the downlink IA method
\cite{Suh11}.
\item Improvements are not limited to the diversity gain in the high SNR regime. Our proposed method also demonstrates improvements, compared to the aforementioned existing methods in the literature,  on the
achievable ergodic mutual information at any SNR.
\end{enumerate}

IA with diversity benefits is also parallelly studied in
\cite{Ning11,Song12,Feng12} at rate-one (one DoF is communicated per
node pair) for interference channels and X channels. Notably,
\cite{Ning11} considers feasibility of IA for diversity gain in
interference channels. Besides interference alignment at unintended
receivers, transmit beamformers are also designed to maximize the
signal to interference-plus-noise (SINR). Consequently, their
designs for a three-user interference channel with three antennas at
transmitters and two antennas at receivers bring a diversity gain of
$3$. Note that our paper differs from \cite{Ning11, Song12} in the
number of DoFs transmitted per node pair. We allow the network to
achieve the maximum DoF gain, while in \cite{Ning11, Song12}, each
transmitter sends only one DoF to the intended receiver. Naturally,
it is more challenging to design a system transmitting more DoFs.
Secondly, our system allows symbol extensions or multiple channel
uses, while their system does not use symbol extensions. Thirdly,
the mechanisms of the protection for the desired link are different.
Our paper considers STBCs, while their papers use transmit
beamformers.

The rest of the paper is organized as follows. Section
\ref{Sec-Model} discusses the channel model and reviews the
alignment scheme in \cite{JaSh08}. In Section \ref{Sec-X}, we
present the alignment method using Alamouti designs for X channels.
Section \ref{Sec-Cellular} extends the proposed method to the IMAC
and IBC. Simulations are shown in Section \ref{Sec-Simulation} and
conclusions are given in Section \ref{Sec-Conclusion}. Proofs of
theorems are provided in the appendices.

\textbf{Notations:} Let a vector $\mb{a}\in \mathds{C}^{N\times 1}$
be drawn from a complex vector space with dimension $N\times 1$. We
denote $\diag(\mb{a})\in \mathds{C}^{N\times N}$ as a diagonal
matrix whose diagonal entries are copied from the entries in
$\mb{a}$. For a matrix $\mb{A}$, we use $\mb{A}^\t$, $\mb{A}^*$,
$\tr(\mb{A})$, $\vec(\mb{A})$, and $\|\mb{A}\|$ to denote its
transpose, Hermitian, trace, vectorization, and Frobenius norm,
respectively. For two matrices $\mb{A}_1$ and $\mb{A}_2$, the
notations $\mb{A}_1\otimes \mb{A}_2$ are used for the Kronecker
product. When matrices $\mb{A}_1, \mb{A}_2\in \mathds{C}^{N\times
N}$ are drawn from the same matrix space, we use $\mb{A}_1\prec
\mb{A}_2$ to denote their difference $\mb{A}_2-\mb{A}_1$ to be
positive definite. The notation $\mc{CN}(0,1)$ is used for a
circular symmetric complex Gaussian distribution with zero mean and
variance 1.

\section{Previous Linear Alignment in Two-user X channels}\label{Sec-Model}
This section explains the X channel model and the previous linear
alignment solution for X channels. Consider an $M$-antenna $2\times
2$ MIMO X channel. Two transmitters send symbols to two receivers,
where each node is equipped with $M$ antennas. Each of the two
transmitters has $K$ independent symbols intended for each of the
two receivers. In other words, Transmitter $j$ has symbol
$s^{[ji]}_k$ for Receiver $i$, where $j,i \in \{1,2\}, k\in
\{1,2,\ldots, K\}$. Throughout the paper, we use indices $j,i,k$ for
transmitter, receiver, and symbol, respectively. The expected power
of $s^{[ji]}_k$ is $\Exp\left|s^{[ji]}_k\right|^2=P$, where $P$ is
the available power at the transmitter per channel use. When the
system is operated in the long-term regime, a Gaussian codebook can
be used for $s^{[ji]}_k$ and each symbol carries one DoF gain. In
other words, the bit rate of $s^{[ji]}_k$ scales like $\log P$ in
the high SNR regime. Since each symbol carries one DoF gain, symbol
rate is equal to the DoF gain. We call a transmission method that
achieves the maximum DoF gain \emph{a maximum-rate scheme}. In the
short-term regime, $s^{[ji]}_k$ is generated from a finite set of
codebooks. With a non-zero minimum rate constraint, system
performance is dominated by the worst codebook. Without loss of
generality, we can assume $s^{[ji]}_k$ is uncoded and drawn from
fixed constellations with finite cardinality, e.g., QPSK or 16QAM.
Denote the constellation as $\mc{S}$ and its cardinality as
$\left|\mc{S}\right|$. The bit rate of $s^{[ji]}_k$ is fixed to
$\log \left|\mc{S}\right|$ at any SNR. For simplicity, we will
present the paper by assuming fixed constellation for $s^{[ji]}_k$
to study the achievable diversity gain unless otherwise stated.

To focus on spatial diversity gain, we model channels as Rayleigh
block fading. The channel matrix from Transmitter $j$ to Receiver
$i$ is denoted as $\mb{H}^{[ji]}\in \mathds{C}^{M\times M}$. Then,
the $(m,n)$th entry in $\mb{H}^{[ji]}$, denoted as $h^{[ji]}_{mn}$,
is the fading channel coefficient from transmit Antenna $m$ to
receive Antenna $n$. We model $h^{[ji]}_{mn}$ as drawn from
i.~i.~d.~$\mc{CN}(0,1)$ distribution. In addition, all channels are
assumed block fading (also known as constant channels), i.e., all
channels keep unchanged during the transmission. Let the transmit
duration be $T$ channel uses, and Transmitter $j$ embeds $2K$
symbols, i.e., $s^{[j1]}_k$ and $s^{[j2]}_k$, into a block
$\mb{X}^{[j]}\in \mathds{C}^{T\times M}$. The signal block sampled
at Receiver $i$ can be written as
\begin{align}\label{eq-channeleq}
\mb{Y}^{[i]}=\mb{X}^{[1]}\mb{H}^{[1i]}+\mb{X}^{[2]}\mb{H}^{[2i]}+\mb{W}^{[i]},\
i\in\{1,2\}.
\end{align}
where $\mb{Y}^{[i]}, \mb{W}^{[i]}\in \mathds{C}^{T\times M}$ and
$\mb{W}^{[i]}$ denotes the additive white Gaussian noise (AWGN)
matrix at Receiver $i$. Each entry in $\mb{W}^{[i]}$ has
i.~i.~d.~$\mc{CN}(0,1)$ distribution.

The reason for choosing $2\times 2$ MIMO X channels is for its
simplicity and the existence of linear alignment using finite
signaling dimensions. For a general $J\times R$ X channels with
$\min\{J,R\}>2$, the feasibility of linear IA is still open, and so
far the best achievable solution is the asymptotical alignment that
requires infinite signaling dimensions to approach the maximum DoF
gain \cite{CaJa09}.

In what follows, we review the linear IA method in \cite{JaSh08} for
the $2\times 2$ MIMO X channels with a change of notations used in
this paper. The alignment achieves the maximum symbol rate of
$\frac{4M}{3}$ symbols/channel use over the network. The design
needs three channel uses for signaling, i.e., $T=3$. Transmitter $j$
linearly combines $2M$ symbols ($M$ symbols for each receiver) into
the transmitted block $\mb{X}^{[j]}$. In total, $4M$ symbols are
transmitted over the network in $3$ channels uses, which provides a
symbol rate of $\frac{4M}{3}$ symbols/channel use. The design is
based on the vector transform of system equation in
\eqref{eq-channeleq},
\begin{align}\label{eq-columneq}
\underset{\mb{y}^{[i]}}{\underbrace{\vec\left(
\mb{Y}^{[i]}\right)}}=\underset{\ol{\mb{H}}^{[1i]}}{\underbrace{\left(\mb{H}^{[1i]\t}\otimes\mb{I}_3\right)}}
\underset{\mb{x}^{[1]}}{\underbrace{\vec\left(\mb{X}^{[1]}\right)}}+\underset{\ol{\mb{H}}^{[2i]}}{\underbrace{\left(\mb{H}^{[2i]\t}\otimes\mb{I}_3\right)}}\underset{\mb{x}^{[2]}}{\underbrace{\vec\left(\mb{X}^{[2]}\right)}}+\underset{\mb{w}^{[i]}}{\underbrace{\vec\left(\mb{W}^{[i]}\right)}},\
i\in\{1,2\},
\end{align}
where $\mb{y}^{[i]},\mb{x}^{[j]}, \mb{w}^{[i]}\in
\mathds{C}^{3M\times 1}$ and $\ol{\mb{H}}^{[ji]}\in
\mathds{C}^{3M\times 3M}$. The equivalent transmitted vector
$\mb{x}^{[j]}$ is designed as linear beamforming of symbols
$s^{[ji]}_k$
\begin{align}\mb{x}^{[j]}=\ol{\mb{v}}^{[j1]}\left[s^{[j1]}_1\
s^{[j1]}_2\ \ldots \
s^{[j1]}_M\right]^{\t}+\ol{\mb{v}}^{[j2]}\left[s^{[j2]}_1\
s^{[j2]}_2\ \ldots \ s^{[j2]}_M\right]^{\t} ,\label{eq-prevbf}
\end{align}
where $\ol{\mb{v}}^{[ji]}$ denotes the $3M\times M$ beamforming
matrix from Transmitter $j$ to Receiver $i$. The symbols
$s^{[11]}_k, (k=1,2,\ldots, M)$ are intended for Receiver 1, hence
become interference for Receiver 2. The beamformer
$\ol{\mb{v}}^{[11]}$ aligns $s^{[11]}_k$ with $s^{[21]}_k$ in an
$M$-dimensional subspace at Receiver 2 as
\begin{align*}
\ol{\mb{H}}^{[22]}\ol{\mb{v}}^{[21]}=\ol{\mb{H}}^{[12]}\ol{\mb{v}}^{[11]}.\end{align*}
Similarly, the symbols $s^{[12]}_k$ are aligned with $s^{[22]}_k$ in
an $M$-dimensional subspace at Receiver 1 as
\begin{align*}\ol{\mb{H}}^{[21]}\ol{\mb{v}}^{[22]}=\ol{\mb{H}}^{[11]}\ol{\mb{v}}^{[12]}.
\end{align*}
Since channel matrices are almost surely full rank, we can
immediately obtain $\ol{\mb{v}}^{[21]}$ and $\ol{\mb{v}}^{[22]}$ as
functions of $\ol{\mb{v}}^{[11]}$ and $\ol{\mb{v}}^{[12]}$,
respectively,
\begin{align}\label{eq-bfstep1}
\ol{\mb{v}}^{[21]}=\left(\ol{\mb{H}}^{[22]}\right)^{-1}\ol{\mb{H}}^{[12]}\ol{\mb{v}}^{[11]},\ol{\mb{v}}^{[22]}=\left(\ol{\mb{H}}^{[21]}\right)^{-1}\ol{\mb{H}}^{[11]}\ol{\mb{v}}^{[12]}.\end{align}
The remaining beamformers $\ol{\mb{v}}^{[11]}$ and
$\ol{\mb{v}}^{[12]}$ are designed for linear independence between
the desired signal space and the interference subspace as
\begin{align}\label{eq-bfstep2}\ol{\mb{v}}^{[11]}=\mb{U}(\mb{I}_M\otimes
\mb{E}_1),\ \ol{\mb{v}}^{[12]}=\mb{U}(\mb{I}_M\otimes \mb{E}_2),
\end{align} where $\mb{E}_1=\left[1,1,0\right]^{\t}$,
$\mb{E}_2=\left[1,0,1\right]^{\t}$, and $\mb{U}\in
\mathds{C}^{3M\times 3M}$ is denoted as the eigenvector matrix of
$\left(\ol{\mb{H}}^{[11]}\right)^{-1}\ol{\mb{H}}^{[21]}\left(\ol{\mb{H}}^{[22]}\right)^{-1}\ol{\mb{H}}^{[12]}$
whose eigenvalues are arranged as $\lambda_{1}\neq \lambda_{2},
\lambda_{1}\neq \lambda_{3}, \lambda_{4}\neq \lambda_{5},
\lambda_{4}\neq \lambda_{6}, \ldots, \lambda_{3M-2}\neq
\lambda_{3M-1}, \lambda_{3M-2}\neq \lambda_{3M}$. At each receiver,
ZF is performed to cancel interference and separate useful symbols
to obtain symbol-by-symbol decodings. From \eqref{eq-bfstep1} and
\eqref{eq-bfstep2}, each transmitter requires global channel
information to design the beamformers. For simplicity, we call this
transmission method \emph{the JaSh scheme}.

\section{Alamouti-coded Transmission for X channels}\label{Sec-X}
In this section, we present how Alamouti designs can be used for the
linear IA in X channels. While previous alignment schemes consider
the designs of alignment precoders only based on interfering links
and disregard the desired links, we incorporate the idea of Alamouti
designs to protect the transmission of desired symbols, because
Alamouti codes achieve full transmit spatial diversity in
point-to-point MIMO systems\cite{Alamouti98}. Consequently, the
proposed alignment method can achieve the same maximum symbol-rate
as the scheme in \cite{JaSh08} but with a higher diversity gain. To
use Alamouti codes, we assume each node in the X channel has two
antennas, i.e., $M=2$. We first present the transmission method in
Subsection \ref{subsec-2ant}, then analyze the achievable diversity
gain in Subsection \ref{subsec-analysis}. In this section, we assume
that each transmitter has channel information from itself to both
receivers, i.e., Transmitter $j$ only knows $\mb{H}^{[j1]}$ and
$\mb{H}^{[j2]}$. Receivers require global channel information.

\subsection{The transmission method}\label{subsec-2ant}
The maximum rate of the double-antenna $2\times 2$ X channel is
$2\times \frac{4}{3}=\frac{8}{3}$\cite{JaSh08}. To achieve this
rate, we design each transmitter to send two symbols to each of the
two receivers in three channel uses, i.e., $K=2$ and $T=3$. The
system diagram is shown in Fig.~\ref{fig-diag}. The transmitted
block $\mb{X}^{[j]}$ is designed as
\begin{align}\label{eq-newbeamformer}\mb{X}^{[j]}=\sqrt{\frac{3}{4}}\left(\left[\begin{array}{cc}s^{[j1]}_1&s^{[j1]}_2\\
-s^{[j1]*}_{2} & s^{[j1]*}_{1}\\0 &
0\end{array}\right]\mb{V}^{[j1]}+\left[\begin{array}{cc}0 &
0\\-s^{[j2]*}_{2} & s^{[j2]*}_{1}\\s^{[j2]}_1&s^{[j2]}_2
\end{array}\right]\mb{V}^{[j2]}\right),\ j \in \{1,2\},\end{align}
where $\mb{V}^{[ji]}\in \mathds{C}^{2\times 2}$ denotes the
beamforming matrix from Transmitter $j$ to Receiver $i$. Recall that
from \eqref{eq-channeleq}, the vertical and horizontal dimensions of
$\mb{X}^{[j]}$ represent temporal and spatial dimensions,
respectively. The symbols to Receiver $1$ are encoded by Alamouti
designs and transmitted in the first two time slots; whereas the
symbols to Receiver $2$ are encoded by Alamouti designs too, but
transmitted in the last two time slots. Compared to the designs in
\eqref{eq-prevbf}, our scheme allows each transmitter to send linear
combinations of both the original symbols and their conjugate. The
beamforming matrices are designed to align $s^{[11]}_k$ and
$s^{[21]}_k$ at Receiver 2, and align $s^{[12]}_k$ and $s^{[22]}_k$
at Receiver 1 as shown in Fig.~\ref{fig-space}. Specifically, we
design the beamforming matrix as the normalized inversion of the
cross channel matrix,
\begin{align}\label{eq-beamformer}
\mb{V}^{[ji]}=c^{[ji]}\left(\mb{H}^{[j\bar{i}]}\right)^{-1},\
j,i\in\{1,2\} \end{align} where the index $\bar{i}$ denotes the
receiver other than Receiver $i$ and the coefficient
$c^{[ji]}=1/\left\|\left(\mb{H}^{[j\bar{i}]}\right)^{-1}\right\|$ is
to satisfy the power constraint\footnote{In this paper, we design
power to be equally allocated between symbols for two users, because
we focus on the diversity gain performance. Further power allocation
to maximize the array gain is possible. }
$\tr\left(\mb{V}^{[ji]}\mb{V}^{[ji]*}\right)=1$. This power
constraint implicitly ensures each entry in $\mb{V}^{[ji]}$ to be
smaller than 1 and avoids high peak powers. The coefficient
$\sqrt{\frac{3}{4}}$ in \eqref{eq-newbeamformer} is to normalize the
transmit power to $\underset{s_k^{[ji]}}{\Exp}
\tr\left(\mb{X}^{[j]}(\mb{X}^{[j]})^*\right)=3P$ in three channel
uses. Inserting \eqref{eq-newbeamformer} into \eqref{eq-channeleq},
the receive signal blocks can be expanded as
{\setlength{\arraycolsep}{2pt}\begin{align}\label{eq-syst1}
  &\mb{Y}^{[1]}=\sum_{j\in \{1,2\}}\sqrt{\frac{3}{4}}\left[\begin{array}{cc}s^{[j1]}_1&s^{[j1]}_2\\ -s^{[j1]*}_{2} & s^{[j1]*}_{1}\\0 & 0\end{array}\right]\tilde{\mb{H}}^{[j1]}
  +\sqrt{\frac{3}{4}}\left[\begin{array}{cc}0 & 0\\-c^{[11]}s^{[12]*}_{2}-c^{[21]}s^{[22]*}_{2} &
  c^{[11]}s^{[12]*}_{1}+c^{[21]}s^{[22]*}_{1}\\c^{[11]}s^{[12]}_1+c^{[21]}s^{[22]}_1&c^{[11]}s^{[12]}_2+c^{[21]}s^{[22]}_2
  \end{array}\right]+\mb{W}^{[1]},\\ \label{eq-syst2}
  &\mb{Y}^{[2]}=\sum_{j\in \{1,2\}}\sqrt{\frac{3}{4}}\left[\begin{array}{cc}0 & 0\\ -s^{[j2]*}_{2} & s^{[j2]*}_{1}\\s^{[j2]}_1&s^{[j2]}_2\end{array}\right]\tilde{\mb{H}}^{[j2]}
  +\sqrt{\frac{3}{4}}\left[\begin{array}{cc}c^{[12]}s^{[11]}_1+c^{[22]}s^{[21]}_1&c^{[12]}s^{[11]}_2+c^{[22]}s^{[21]}_2\\-c^{[12]}s^{[11]*}_{2}-c^{[22]}s^{[21]*}_{2} &
  c^{[12]}s^{[11]*}_{1}+c^{[22]}s^{[21]*}_{1}\\0 & 0
  \end{array}\right]+\mb{W}^{[2]},
\end{align}}
\hspace{-3pt}where
$\tilde{\mb{H}}^{[ji]}=\mb{V}^{[ji]}\mb{H}^{[ji]}$ denotes the
equivalent channels that incorporate beamforming matrices. In the
above equations, the first term represents desired symbols, whereas
the second term represents interference. It can be observed that the
interference term still has Alamouti structure, since $c^{[ji]}$ is
a real number. In other words, $s^{[12]}_k$ and $s^{[22]}_k$ are
aligned at Receiver 1, while $s^{[11]}_k$ and $s^{[21]}_k$ are
aligned at Receiver 2. We can further convert the system equations
into vector forms to study the receive signal space. Let us denote
the $t$th row of $\mb{Y}^{[i]}$ and $\mb{W}^{[i]}$ be
$\mb{y}_{t}^{[i]}$ and $\mb{w}_{t}^{[i]}$, respectively, where $t\in
\{1,2,3\}$. Denote the aligned interfering symbols as
$I_k^{[1]}=c^{[11]}s^{[12]}_k+c^{[21]}s^{[22]}_k,I_k^{[2]}=c^{[12]}s^{[11]}_k+c^{[22]}s^{[21]}_k$,
and the $(m,n)$th entry of $\tilde{\mb{H}}^{[ji]}$ as
$\tilde{{h}}^{[ji]}_{mn}$. The receiver calculates
$\tilde{\mb{y}}^{[i]}=\vec\left(\left[\mb{y}_{1}^{[i]*},(-1)^{i}\left(\mb{y}_{2}^{[i]}\right)^{\t},\mb{y}_{3}^{[i]*}\right]^*\right)$,
and Eqns.~\eqref{eq-syst1} and \eqref{eq-syst2} can be converted as
\begin{align}\label{eq-vecsys1}
&\tilde{\mb{y}}^{[1]}=\sqrt{\frac{3}{4}}
\left[\begin{array}{cccccc}\tilde{h}^{[11]}_{11}&\tilde{h}^{[11]}_{21}&\tilde{h}^{[21]}_{11}&\tilde{h}^{[21]}_{21}&0&0\\ -\tilde{h}^{[11]*}_{21}&\tilde{h}^{[11]*}_{11}&-\tilde{h}^{[21]*}_{21}&\tilde{h}^{[21]*}_{11}&0&1\\0&0&0&0&1&0\\
\tilde{h}^{[11]}_{12}&\tilde{h}^{[11]}_{22}&\tilde{h}^{[21]}_{12}&\tilde{h}^{[21]}_{22}&0&0\\
-\tilde{h}^{[11]*}_{22}&\tilde{h}^{[11]*}_{12}&-\tilde{h}^{[21]*}_{22}&\tilde{h}^{[21]*}_{12}&-1&0\\0&0&0&0&0&1\end{array}\right]\left[\begin{array}{c}s^{[11]}_1\\s^{[11]}_2\\s^{[21]}_1\\s^{[21]}_2\\I^{[1]}_1\\I_2^{[1]}\end{array}\right]
+\tilde{\mb{w}}^{[1]}\end{align} at Receiver $1$, and
\begin{align}\label{eq-vecsys2}
&\tilde{\mb{y}}^{[2]}=\sqrt{\frac{3}{4}}\left[\begin{array}{cccccc}0&0&0&0&1&0\\
\tilde{h}^{[12]*}_{21}&-\tilde{h}^{[12]*}_{11}&\tilde{h}^{[22]*}_{21}&-\tilde{h}^{[22]*}_{11}&0&-1\\
\tilde{h}^{[12]}_{11}&\tilde{h}^{[12]}_{21}&\tilde{h}^{[22]}_{11}&\tilde{h}^{[22]}_{21}&0&0\\0&0&0&0&0&1\\ \tilde{h}^{[12]*}_{22}&-\tilde{h}^{[12]*}_{12}&\tilde{h}^{[22]*}_{22}&-\tilde{h}^{[22]*}_{12}&1&0\\
\tilde{h}^{[12]}_{12}&\tilde{h}^{[12]}_{22}&\tilde{h}^{[22]}_{12}&\tilde{h}^{[22]}_{22}&0&0\end{array}\right]\left[\begin{array}{c}s^{[12]}_2\\s^{[12]}_1\\s^{[22]}_1\\s^{[22]}_2\\I_1^{[2]}\\I_2^{[2]}\end{array}\right]
+\tilde{\mb{w}}^{[2]}
\end{align}
at Receiver $2$, where $\tilde{\mb{y}}^{[i]},
\tilde{\mb{w}}^{[i]}\in \mathds{C}^{6\times 1}$ and
$\tilde{\mb{w}}^{[i]}=\vec \left( \left[ \mb{w}_{1}^{[i]*},
(-1)^i\mb{w}_{2}^{[i]\t}, \mb{w}_{3}^{[i]*}\right]^*\right)$ denotes
the equivalent AWGN vector at Receiver $i$. It can be observed that
the equivalent channel vectors of $s_1^{[ji]}$ and $s_2^{[ji]}$
(correspond to the $(2j-1)$ and $(2j)$th columns in the equivalent
channel matrix) are orthogonal. Thus, the desired links are enhanced
by embedding Alamouti codes into alignment. The receive signal space
is illustrated in Fig.~\ref{fig-space}.

In what follows, we explain receiver decoding using IC originally
proposed for MAC\cite{NaSeCa}. Although IC is essentially ZF, IC
avoids high dimensional matrix processing (simplify the computation
of matrix inversion in the projection matrix). Since the designs of
the network is symmetric to each receiver, we focus only on the
processing at Receiver 1 to simplify presentation. Processing at
Receiver 2 is similar and has the same performance as that of
Receiver 1. Since we only discuss Receiver 1, in what follows, we
will remove receiver index $i$ from $\tilde{\mb{y}}^{[i]}$ and
$\tilde{\mb{w}}^{[i]}$ to simplify the presentation. The IC has the
following two steps:
\subsubsection{Step 1: Remove aligned interference}
Let the $\tau$th entry of $\tilde{\mb{y}}$ and $\tilde{\mb{w}}$ in
\eqref{eq-vecsys1} be $\tilde{y}_{\tau}$ and $\tilde{w}_{\tau}$,
respectively. Since the equivalent channels for interference
$I_1^{[1]}$ and $I_2^{[1]}$ are constant in \eqref{eq-vecsys1}, the
aligned interference $I_1^{[1]}$ and $I_2^{[1]}$ can be cancelled by
\begin{align}\label{eq-IC}\left[\begin{array}{cccc}\tilde{y}_{1}&
\tilde{y}_{2}+\tilde{y}_{6}& \tilde{y}_{4}&
\tilde{y}_{5}-\tilde{y}_{3}\end{array}\right]^{\t}.\end{align}
Let
{\setlength{\arraycolsep}{2pt}$\hat{\mb{y}}_1=\left[\begin{array}{cc}\tilde{y}_{1}&
\tilde{y}_{2}+\tilde{y}_{6}\end{array}\right]^{\t},
\hat{\mb{y}}_2=\left[\begin{array}{cc}\tilde{y}_{4}&
\tilde{y}_{5}-\tilde{y}_{3}\end{array}\right]^{\t},
\hat{\mb{w}}_1=\left[\begin{array}{cc}\tilde{w}_{1}&
\tilde{w}_{2}+\tilde{w}_{6}\end{array}\right]^{\t},\
\hat{\mb{w}}_2=\left[\begin{array}{cc}\tilde{w}_{4}&
\tilde{w}_{5}-\tilde{w}_{3}\end{array}\right]^{\t}$}. The
resulting equivalent system equation can be simplified as
\begin{align}\label{eq-system1}
\left[\begin{array}{c}\hat{\mb{y}}_1\\\hat{\mb{y}}_2\end{array}\right]=\sqrt{\frac{3}{4}}\left(\left[\begin{array}{c}\hat{\mb{H}}^{[11]}_1\\\hat{\mb{H}}^{[11]}_2\end{array}\right]\left[\begin{array}{c}s^{[11]}_1\\s^{[11]}_2\end{array}\right]+\left[\begin{array}{c}\hat{\mb{H}}^{[21]}_1\\\hat{\mb{H}}^{[21]}_2\end{array}\right]\left[\begin{array}{c}s^{[21]}_1\\s^{[21]}_2\end{array}\right]\right)+\left[\begin{array}{c}\hat{\mb{w}}_1\\\hat{\mb{w}}_2\end{array}\right],
\end{align}
where $\hat{\mb{H}}^{[j1]}_n\in \mathds{C}^{2\times 2}$ has an
Alamouti structure
\begin{align}\nonumber
\hat{\mb{H}}^{[j1]}_n=\left[\begin{array}{cc}\tilde{h}_{1n}^{[j1]}&\tilde{h}^{[j1]}_{2n}\\
\tilde{h}_{2n}^{[j1]*}&-\tilde{h}_{1n}^{[j1]*}\end{array}\right],\
j\in\{1,2\}.
\end{align}
\subsubsection{Step 2: Decouple symbols from different transmitters}
The system equation in \eqref{eq-system1} is similar to that of a
MAC system with two double-antenna transmitters and one
double-antenna receiver. The equivalent noise vector
$\left[\hat{\mb{w}}_1^{\t}\ \hat{\mb{w}}_2^{\t}\right]^{\t}$ is
white but does not have identical variances for each entry. IC is
applicable to decouple $s^{[11]}_1$ and $s^{[11]}_2$ from
$s^{[21]}_1$ and $s^{[21]}_2$. Receiver 1 conducts
\begin{align}\label{eq-remainedeq}
\underset{\hat{\mb{y}}}{\underbrace{\frac{\hat{\mb{H}}_1^{[21]*}}{\left\|\hat{\mb{H}}_1^{[21]}\right\|^2}\hat{\mb{y}}_1-\frac{\hat{\mb{H}}_2^{[21]*}}{\left\|\hat{\mb{H}}_2^{[21]}\right\|^2}\hat{\mb{y}}_2}}=&\sqrt{\frac{3}{4}}\underset{\hat{\mb{H}}}{\underbrace{\left(\frac{\hat{\mb{H}}_1^{[21]*}\hat{\mb{H}}_1^{[11]}}{\left\|\hat{\mb{H}}_1^{[21]}\right\|^2}-\frac{\hat{\mb{H}}_2^{[21]*}\hat{\mb{H}}_2^{[11]}}{\left\|\hat{\mb{H}}_2^{[21]}\right\|^2}\right)}}\left[\begin{array}{c}s^{[11]}_1\\s^{[11]}_2\end{array}\right]+\frac{\hat{\mb{H}}_1^{[21]*}\hat{\mb{w}}_1}{\left\|\hat{\mb{H}}_1^{[21]}\right\|^2}-\frac{\hat{\mb{H}}_2^{[21]*}\hat{\mb{w}}_2}{\left\|\hat{\mb{H}}_2^{[21]}\right\|^2}.
\end{align}
Due to the completeness of matrix addition, matrix
multiplication, and scalar multiplication of the Alamouti
matrix, the equivalent channel matrix $\hat{\mb{H}}$ still has
the Alamouti structure. Thus, $s^{[11]}_k$ can be decoded by
\begin{align}\label{eq-MLdecoding}
s^{[11]}_k=\arg\max_{s}\hat{\mb{h}}_k^*\hat{\mb{y}}s,
k\in\{1,2\},
\end{align} where $\hat{\mb{h}}_k$ denotes the $k$th column of
$\hat{\mb{H}}$. Note that the decoding complexity is
symbol-by-symbol. Similar to \eqref{eq-remainedeq}, we can decouple
$s^{[21]}_1$ and $s^{[21]}_2$ by calculating
$\frac{\hat{\mb{H}}_1^{[11]*}}{\left\|\hat{\mb{H}}_1^{[11]}\right\|^2}\hat{\mb{y}}_1-\frac{\hat{\mb{H}}_2^{[11]*}}{\left\|\hat{\mb{H}}_2^{[11]}\right\|^2}\hat{\mb{y}}_2$.
Similar operations can be performed at Receiver 2 to decode
$s^{[12]}_k$ and $s^{[22]}_k$. Therefore, four procedures of
symbol-by-symbol decoding are required at each receiver to recover
desired symbols.

\subsection{Performance analysis}\label{subsec-analysis}
This subsection provides diversity gain analysis in the short-term
regime. Further, we show that the proposed scheme does not lose the
DoF gain in the long-term regime.

In a point-to-point channel, diversity gain is defined as the
asymptotical slope of BER with respect to the receive SNR in the
high SNR regime. For our considered network model, we define
diversity gain as the asymptotical rate of BER with respect to power
$P$ for the symbol-by-symbol decoding given in
\eqref{eq-MLdecoding}. A diversity calculation technique using
instantaneous normalized receive SNR was proposed in
\cite{divthm_report} for short-term communication systems. For a
vector channel with an equivalent system equation
$\mb{y}=\mb{h}s+\mb{w}$, where $\mb{y}, \mb{h}, s, \mb{w}$ denote
the receive signal vector, the equivalent channel vector, transmit
symbol, and the equivalent noise vector, respectively. The
\emph{instantaneous normalized receive SNR} for symbol $s$ is
defined as $\gamma=\mb{h}^*\mb{\Sigma}^{-1}\mb{h}$, where
$\mb{\Sigma}$ is the covariance matrix of $\mb{w}$. Diversity gain
for the maximum-likelihood (ML) decoding of this equivalent system
equation can be calculated as
\begin{align}\label{eq-div}
d=-\lim_{\epsilon\rightarrow 0}\frac{\log P(\gamma<\epsilon)}{\log
\epsilon},\end{align} where $P(\gamma<\epsilon)$ denotes the outage
probability of $\gamma$. Using this technique, we present the
following theorems.

\begin{theorem}\label{thm-prediv}
In the short-term regime, the JaSh scheme achieves a diversity gain
no more than $1$ for the $2\times 2$ double-antenna X channel.
\end{theorem}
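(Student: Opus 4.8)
The plan is to prove the equivalent statement that, for the JaSh scheme, the post-processing instantaneous normalized SNR $\gamma$ of any desired symbol (say $s^{[11]}_1$ at Receiver~$1$; all other cases are symmetric) obeys $P(\gamma<\epsilon)\ge c\,\epsilon$ for some $c>0$ and all small $\epsilon$, which by \eqref{eq-div} forces the diversity gain to be at most $1$. The first step is to put $\gamma$ in closed form. Since $\ol{\mb{H}}^{[ji]}=\mb{H}^{[ji]\t}\otimes\mb{I}_3$, the matrix $(\ol{\mb{H}}^{[11]})^{-1}\ol{\mb{H}}^{[21]}(\ol{\mb{H}}^{[22]})^{-1}\ol{\mb{H}}^{[12]}$ whose eigenvector matrix $\mb{U}$ builds the JaSh precoders equals $\mb{K}^{\t}\otimes\mb{I}_3$, with $\mb{K}:=\mb{H}^{[12]}(\mb{H}^{[22]})^{-1}\mb{H}^{[21]}(\mb{H}^{[11]})^{-1}\in\mathds{C}^{2\times2}$; write $\mu_1,\mu_2$ for its eigenvalues and $\mb{u}_1,\mb{u}_2$ for unit eigenvectors of $\mb{K}^{\t}$. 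By \eqref{eq-bfstep1}--\eqref{eq-bfstep2}, the six signal directions at Receiver~$1$ (four desired, two aligned-interference) are the columns of $\mb{B}(\mb{I}_2\otimes\mb{E}_1)$, of $\mb{B}\,\diag(\lambda_1,\ldots,\lambda_6)(\mb{I}_2\otimes\mb{E}_1)$, and of $\mb{B}(\mb{I}_2\otimes\mb{E}_2)$, where $\mb{B}:=\ol{\mb{H}}^{[11]}\mb{U}$ and $\{\lambda_i\}$ lists $\{\mu_1,\mu_2\}$ with multiplicity three each; the eigenvalue ordering required in Section~\ref{Sec-Model} forces, up to symmetry, the pattern $\mu_1,\mu_2,\mu_2,\mu_2,\mu_1,\mu_1$, for which the $6\times6$ ``selector'' matrix assembled from these blocks has determinant $\propto(\mu_1-\mu_2)^2$. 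As the noise in \eqref{eq-columneq} is white, the zero-forcing SNR of $s^{[11]}_1$ is, up to the fixed transmit-power factor, the squared Euclidean distance from its direction $\mb{b}_1$ (the first column of $\mb{B}(\mb{I}_2\otimes\mb{E}_1)$) to the span $V$ of the other five directions; expressing the unit normal of $V$ through the dual basis $\{\mb{\beta}^j\}$ of the columns of $\mb{B}$ (i.e.\ the rows of $\mb{B}^{-1}=\mb{U}^{-1}(\ol{\mb{H}}^{[11]})^{-1}$) gives $\gamma\propto|\mb{f}^{*}\mb{b}_1|^2/\|\mb{f}\|^2=|\mu_1-\mu_2|^2/\|\mb{f}\|^2$, where $\mb{f}$ is an explicit $\mu_i$-weighted combination of three of the $\mb{\beta}^j$.

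The second step is to control this expression near the ``bad set'' $\{\mu_1=\mu_2\}$, where the design degenerates: $\mb{u}_1$ and $\mb{u}_2$ collapse onto a common direction, $\mb{U}$ turns singular, and $\|\mb{f}\|\to\infty$. At a generic point of $\{\mu_1=\mu_2\}$ the matrix $\mb{K}$ is non-scalar, hence a single Jordan block, so a transversal perturbation separates $\mb{u}_1,\mb{u}_2$ at rate $\asymp|\mu_1-\mu_2|$ and $|\det[\mb{u}_1,\mb{u}_2]|\asymp|\mu_1-\mu_2|$. Writing the columns of $\mb{B}$ as $(\mb{H}^{[11]\t}\mb{u}_a)\otimes\mb{e}_b$ (with $\mb{e}_b$ a standard basis vector of $\mathds{C}^3$), one checks that the $\mu_i$-combination defining $\mb{f}$ carries no further cancellation, so $\|\mb{f}\|^{2}\asymp|\det\mb{H}^{[11]}|^{-2}\,|\det[\mb{u}_1,\mb{u}_2]|^{-2}\asymp|\mu_1-\mu_2|^{-2}$, up to a factor that is almost surely finite and bounded on a neighbourhood of almost every point of the bad set. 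Hence $\gamma\le C\,|\mu_1-\mu_2|^{4}=C\,|(\tr\mb{K})^2-4\det\mb{K}|^{2}$ there, with $C$ an almost-surely-finite function of the channel matrices.

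The third step is the probabilistic estimate. Freeze every channel coefficient except $h:=h^{[12]}_{11}\sim\CN(0,1)$. Then $\mb{K}=\mb{H}^{[12]}\mb{L}$ with $\mb{L}$ constant, so $(\tr\mb{K})^2-4\det\mb{K}$ is a quadratic polynomial in $h$ whose leading coefficient $L_{11}^2$ is nonzero almost surely; hence it has a finite root $h_0$ at which $\mu_1=\mu_2$, and by the previous step $\gamma\le C'|h-h_0|^2$ on a bounded neighbourhood of $h_0$, with $C'<\infty$ almost surely. Integrating the $\CN(0,1)$ density over $\{|h-h_0|^2<\epsilon/C'\}$ gives $P(\gamma<\epsilon\mid\mathrm{rest})\ge c(\mathrm{rest})\,\epsilon$ for small $\epsilon$, where $c(\mathrm{rest})>0$ almost surely; averaging over the remaining channels and applying Fatou's lemma yields $\liminf_{\epsilon\to0}P(\gamma<\epsilon)/\epsilon>0$, i.e.\ $P(\gamma<\epsilon)=\Omega(\epsilon)$, so $d\le1$.

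I expect the second step to be the main obstacle. A crude bound that simply multiplies the blow-up rates of $\diag(\lambda_i)^{-1}$ (order $|\mu_1-\mu_2|^{-2}$), of $\mb{U}^{-1}$, and of the surviving factors of $\mb{B}^{-1}$ does not reproduce the exponent $4$: it over- or under-counts, because the singularity of the selector matrix and that of $\mb{U}$ sit on the same locus $\{\mu_1=\mu_2\}$ and their leading divergences partly cancel inside $\gamma$. Pinning down the exact order $\gamma\asymp|\mu_1-\mu_2|^{4}$ therefore requires carrying out the dual-basis computation of $\mb{f}$ and exhibiting the surviving cancellation explicitly, rather than appealing to generic eigenvalue/eigenvector perturbation inequalities; once that order is established, the third step is just the standard ``tube'' estimate $P(|(\tr\mb{K})^2-4\det\mb{K}|<t)\asymp t^2$ transported through a single Gaussian coordinate.
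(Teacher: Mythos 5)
Your proposal takes a genuinely different route from the paper, and the route you chose is precisely the one whose hardest step you leave open. The paper never goes near the degenerate locus $\{\mu_1=\mu_2\}$: it upper-bounds $\gamma$ by the SNR of a reduced system with the aligned interference deleted, uses a convenient explicit ZF receiver (justified by its Lemma~\ref{lemma2}, which shows all ZF receivers give the same output SNR) to obtain the closed form $\gamma'=(1-\kappa)^2/(\delta_{11}+\kappa^2\delta_{22})$ with $\kappa=\lambda_1/\lambda_2$ and $\mb{\Delta}=\left(\mb{u}^*\mb{H}^{[11]*}\mb{H}^{[11]}\mb{u}\right)^{-1}$, and then \emph{conditions on the positive-probability event} $1\le|\kappa|\le2$, on which $\gamma'\le 9/\tr\mb{\Delta}\le 18/\tr\left((\mb{H}^{[11]*}\mb{H}^{[11]})^{-1}\right)$. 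The diversity-$1$ outage event in the paper is thus the near-singularity of the direct channel $\mb{H}^{[11]}$ (Lemma~\ref{lemma1}: $1/\tr\left(\mb{F}^{-1}(\mb{F}^{-1})^*\right)$ tracks the smallest squared singular value and has diversity exactly $1$), not the collapse of the alignment geometry. Your outage event is the complementary one, and it is a legitimate candidate, but it obliges you to pin down the exact vanishing rate of $\gamma$ at the eigenvalue-coincidence locus.

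That is where the gap is, and it is not a removable technicality. Since $|\mu_1-\mu_2|^2$ equals the discriminant $(\tr\mb{K})^2-4\det\mb{K}$, which for your frozen-coordinate argument is a quadratic in one complex Gaussian variable with a simple root, the tube $\{|\mu_1-\mu_2|<t\}$ has probability $\asymp t^4$; hence a bound $\gamma\le C|\mu_1-\mu_2|^{\alpha}$ yields $P(\gamma<\epsilon)\gtrsim\epsilon^{4/\alpha}$, and you need $\alpha\ge4$. Your claimed exponent $4$ is exactly the borderline value: the cheap bound $\gamma=O(|\mu_1-\mu_2|^2)$ obtained from the numerator alone gives only diversity $\le2$, so the additional factor $|\mu_1-\mu_2|^2$ coming from the blow-up of the dual basis is the entire content of the theorem on your route --- and you explicitly defer that computation (``one checks that \ldots carries no further cancellation''). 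The claim is in fact correct: it can be read off the paper's own intermediate expression \eqref{eq-SNR2}, since at a generic (Jordan-block) point of the coincidence locus $\delta_{11},\delta_{22}\asymp\left|\det[\mb{u}_1\ \mb{u}_2]\right|^{-2}\asymp|\mu_1-\mu_2|^{-2}$, giving $\gamma'\asymp|\mu_1-\mu_2|^4$. So your argument can be completed, but completing it requires essentially reproducing the paper's explicit ZF computation, after which the paper's ending (condition on $\kappa$ in a bounded set, reduce to $\lambda_{\min}(\mb{H}^{[11]})$ via Lemma~\ref{lemma1}) is both shorter and immune to the borderline-exponent issue. Your Kronecker-structure reduction in the first step and the Fatou argument in the third step are sound.
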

\begin{proof}
See Appendix \ref{appenpre} for proof.
\end{proof}

The intuition of the theorem can be explained as follows. The
receiver observes a six-dimensional signal space, in which two
dimensions are for aligned interference and four dimensions are for
desired symbols. The equivalent channel vectors for desired symbols
are randomly distributed in the receive signal space as shown from
\eqref{eq-sys1} (the beamforming vectors $\mb{u}_1, \mb{u}_2$ depend
on channels of $\mb{H}^{[21]},\mb{H}^{[12]}, \mb{H}^{[22]}$, while
the equivalent channel matrix is $\mb{H}^{[11]}$ for all desired
symbols). By a ZF receiver, the projection to cancel the aligned
interference and decouple the desired symbols incurs SNR loss. Thus,
the resulting diversity gain is 1.

\begin{theorem}\label{thm-div}
In the short-term regime, the proposed alignment method with
Alamouti designs achieves a diversity gain of 2 for the $2\times 2$
double-antenna X channel.
\end{theorem}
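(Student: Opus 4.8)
The plan is to show that the instantaneous normalized receive SNR $\gamma$ for each desired symbol, as produced by the decoding chain in \eqref{eq-remainedeq}--\eqref{eq-MLdecoding}, has an outage probability $P(\gamma<\epsilon)$ that decays like $\epsilon^2$, so that the diversity definition \eqref{eq-div} yields $d=2$. The key structural fact to exploit is that at every stage of the interference cancellation the Alamouti form is preserved: the per-antenna blocks $\hat{\mb{H}}^{[j1]}_n$ are Alamouti, hence $\hat{\mb{H}}^{[j1]*}_n\hat{\mb{H}}^{[j1]}_n = \|\hat{\mb{H}}^{[j1]}_n\|^2 \mb{I}_2/\sqrt{2}$ (a scaled identity), and therefore the combined equivalent channel $\hat{\mb{H}}$ in \eqref{eq-remainedeq} is again Alamouti, so $\hat{\mb{h}}_1\perp\hat{\mb{h}}_2$ and $\hat{\mb{H}}^*\hat{\mb{H}} = \|\hat{\mb{H}}\|^2\mb{I}_2/\sqrt2$. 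This is exactly the no-SNR-loss property of orthogonal designs, and it means the ZF/IC decoupling in Step 2 costs nothing in SNR: the post-processing SNR for $s^{[11]}_k$ is, up to constants, $\|\hat{\mb{H}}\|^2$ divided by the effective noise variance.

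First I would carry out the noise bookkeeping. Because $\tilde{\mb w}$ is obtained by conjugating and sign-flipping rows of white Gaussian noise, it remains white; the combinations in \eqref{eq-IC} give $\hat{\mb w}_1,\hat{\mb w}_2$ with diagonal (but not identity) covariance — entries of variance $1$ or $2$ — and the further combination in \eqref{eq-remainedeq} scales $\hat{\mb w}_n$ by $\hat{\mb H}_n^{[21]*}/\|\hat{\mb H}_n^{[21]}\|^2$. I would compute $\mb\Sigma$, the covariance of the resulting noise $\hat{\mb w}/\sqrt{\|\hat{\mb H}_1^{[21]}\|^2}+\cdots$, and show it is bounded between two positive multiples of $\left(\|\hat{\mb H}_1^{[21]}\|^{-2}+\|\hat{\mb H}_2^{[21]}\|^{-2}\right)\mb I_2$; the constants do not affect the diversity order. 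Then $\gamma = \frac34\,\hat{\mb h}_k^*\mb\Sigma^{-1}\hat{\mb h}_k$ is, up to bounded constants, $\|\hat{\mb H}\|^2\big/\big(\|\hat{\mb H}_1^{[21]}\|^{-2}+\|\hat{\mb H}_2^{[21]}\|^{-2}\big)$.

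Next I would lower-bound $\gamma$ by something transparently Chi-square-like. Using the Alamouti identities, $\hat{\mb H} = \alpha_1\hat{\mb H}_1^{[11]} - \alpha_2\hat{\mb H}_2^{[11]}$ with $\alpha_n = \|\hat{\mb H}_n^{[21]}\|^{-2}$ real; here $\hat{\mb H}_n^{[11]}$ is the Alamouti block built from the $n$th column of $\tilde{\mb H}^{[11]} = \mb V^{[11]}\mb H^{[11]}$, and since $\mb V^{[11]} = c^{[11]}(\mb H^{[12]})^{-1}$ with $c^{[11]}=1/\|(\mb H^{[12]})^{-1}\|$ we have $\tilde{\mb H}^{[11]} = (\mb H^{[12]})^{-1}\mb H^{[11]}\big/\|(\mb H^{[12]})^{-1}\|$. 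The crucial point is that $\mb H^{[11]}$ appears \emph{only} in the desired part and is independent of $\mb H^{[12]},\mb H^{[21]},\mb H^{[22]}$; conditioning on the latter three, $\tilde{\mb H}^{[11]}$ is a fixed invertible linear image of the Gaussian matrix $\mb H^{[11]}$, so each column $\tilde{\mb h}^{[11]}_{\cdot n}$ is a nondegenerate Gaussian vector with covariance bounded below by a positive constant times $\mb I$ (the relevant condition numbers are finite almost surely and integrable enough not to disturb the exponent — a standard fact one may cite from \cite{divthm_report,short_report}). I would then show $\|\hat{\mb H}\|^2 \ge$ (bounded positive factor) $\times \big(\alpha_1^2\|\tilde{\mb h}^{[11]}_{\cdot1}\|^2 + \alpha_2^2\|\tilde{\mb h}^{[11]}_{\cdot2}\|^2\big)$ after accounting for the cross term — most cleanly by noting $\hat{\mb H}$ acts on the length-$4$ real-representation and bounding its singular values from below — so that, dividing by $\alpha_1+\alpha_2$ type quantities, $\gamma \gtrsim \min(\|\tilde{\mb h}^{[11]}_{\cdot1}\|^2,\|\tilde{\mb h}^{[11]}_{\cdot2}\|^2)$, conditionally a minimum of two (correlated) norms of $2$-dimensional complex Gaussians, whose tail at $0$ is $O(\epsilon^2)$. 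Combined with the matching upper bound $d\le 2$ (two transmit antennas cannot give more), this gives $d=2$.

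\textbf{The main obstacle} I anticipate is controlling the interplay between the \emph{random} coefficients $\alpha_n=\|\hat{\mb H}_n^{[21]}\|^{-2}$ (which can be small, i.e.\ when the equivalent interference channel is strong) and the desired-channel norms: one must verify that the ratio $\|\hat{\mb H}\|^2/(\alpha_1+\alpha_2)$ is not driven to zero by a conspiracy between the $\alpha_n$ and the directions of $\tilde{\mb h}^{[11]}_{\cdot n}$, and that the ``bad'' events where condition numbers of $\mb H^{[12]}$ (entering $c^{[11]}$ and $\mb V^{[11]}$) blow up contribute outage probability of order at least $\epsilon^2$ and hence do not reduce the exponent. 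The clean way around this is to condition on all channels except $\mb H^{[11]}$, reduce $\gamma$ to a quadratic form $\mb h^*\mb Q\mb h$ in the single Gaussian $\mb h=\vec(\mb H^{[11]})$ with $\mb Q\succ 0$ a.s., invoke the general lemma that such a quadratic form has $P(\mb h^*\mb Q\mb h<\epsilon)=\Theta(\epsilon^{r})$ with $r$ the number of ``useful'' dimensions (here $r=2$, determined by the Alamouti decoupling), and finally take expectation over the conditioning channels, checking the resulting constant is finite — this last integrability check over $\det(\mb H^{[12]})^{-1}$ etc.\ is the one genuinely delicate estimate, and I would handle it exactly as in \cite{divthm_report}.
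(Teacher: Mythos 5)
Your proposal follows essentially the same route as the paper's Appendix~\ref{appen1} proof: bound the non-identity noise covariance between constant multiples of the unweighted quadratic form, condition on all channels except $\mb{H}^{[11]}$ so that the post-ZF SNR becomes a rank-two quadratic form (a generalized Chi-square with two degrees of freedom) in the Gaussian $\mb{H}^{[11]}$ with conditional outage $\Theta(\epsilon^2)$, and then verify that the expectation of the conditional coefficient $1/(\lambda_1\lambda_2)$ over the remaining channels is finite and positive --- which is exactly the paper's delicate final step, carried out there via a determinant inequality for sums of Hermitian matrices, Hadamard's inequality, and an explicit Wishart eigenvalue integral. Two small corrections that do not affect the plan: for an Alamouti block $\mb{A}$ one has $\mb{A}^*\mb{A}=\tfrac{1}{2}\left\|\mb{A}\right\|^2\mb{I}_2$ (not $\left\|\mb{A}\right\|^2\mb{I}_2/\sqrt{2}$), and the parenthetical claim that ``two transmit antennas cannot give more'' than diversity $2$ is false (the point-to-point bound for this $2\times 2$ configuration is $4$, as the paper's conclusion notes) --- but neither is load-bearing, since your two-sided $\Theta(\epsilon^{2})$ tail claim already pins the exponent at exactly $2$.
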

\begin{proof}
See Appendix \ref{appen1} for proof.
\end{proof}

This diversity improvement can be intuitively explained as follows.
Compared to the JaSh scheme, two desired symbols are orthogonal (See
\eqref{eq-vecsys1} and \eqref{eq-vecsys2}) due to the use of
Alamouti structure at transmitters. After removing aligned
interference, the system equation in \eqref{eq-system1} is similar
to a MAC with two double-antenna transmitters and one double-antenna
receiver. The IC uses one receive antenna to decouple symbols. Then,
the receive diversity of the proposed scheme is 1. A transmit
diversity gain of 2 is achievable through Alamouti designs. The
total diversity gain is the product of the transmit diversity and
receiver diversity, i.e., it is 2.

Next, we discuss the proposed scheme in the long-term regime. In
this case, Gaussian codebooks can be used for $s_k^{[ji]}$, and
transmitters adjust the rate over infinite sets of Gaussian
codebooks based on CSIT and the transmit power. The network can
reliably transmit information without outage assuming infinite
coding. The DoF gain is defined as the asymptotical ratio between
the bit-rate and $\log P$\cite{dtse}. For our proposed scheme, each
symbol $s_k^{[ji]}$ can be viewed as a data stream whose bit-rate
can be adjusted adaptively. The achievable DoF gain is shown in the
following theorem.

\begin{theorem}\label{thm-div}
In the long-term regime, the proposed alignment method with Alamouti
designs achieves the maximum DoF gain of $\frac{8}{3}$ for the
$2\times 2$ double-antenna X channel.
\end{theorem}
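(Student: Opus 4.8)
The plan is to pair a simple converse with an achievability argument. The converse is immediate: the maximum DoF gain of the $2\times 2$ double-antenna X channel is known to equal $\frac{8}{3}$ (\cite{JaSh08}; see also \cite{CaJa09}), so the proposed scheme cannot exceed $\frac{8}{3}$. It therefore remains to show the scheme \emph{attains} $\frac{8}{3}$. The scheme conveys the $8$ independent streams $s^{[ji]}_k$ ($j,i\in\{1,2\}$, $k\in\{1,2\}$) over $T=3$ channel uses, so it suffices to prove that, in the long-term regime with Gaussian codebooks, infinite-length coding against the noise, and CSIT-based rate adaptation, each stream carries one DoF; eight streams over three channel uses then give a DoF gain of $\frac{8}{3}$.

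First I would reuse the equivalent input--output relations of Section~\ref{subsec-2ant}. Since each scaling coefficient $c^{[ji]}$ is real, the aligned interference at every receiver retains its Alamouti structure and occupies a \emph{fixed}, channel-independent two-dimensional subspace of the six-dimensional receive space (the constant columns in \eqref{eq-vecsys1}--\eqref{eq-vecsys2}); Step~1 of the IC procedure therefore cancels it \emph{exactly}, yielding the interference-free relation \eqref{eq-system1}. Step~2 decouples the two transmitters, and, by closure of the Alamouti structure under matrix addition, multiplication, and scaling, each $s^{[ji]}_k$ is observed through an equivalent scalar Gaussian channel whose instantaneous normalized receive SNR $\gamma^{[ji]}_k$ (in the sense of \eqref{eq-div}) is a positive multiple of $\|\hat{\mb{H}}\|^2$, where $\hat{\mb{H}}$ is the $2\times 2$ Alamouti matrix of \eqref{eq-remainedeq} (or its analogue for the other three decoding procedures at the two receivers). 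Adapting the rate to this known equivalent channel, stream $s^{[ji]}_k$ achieves the rate $\log(1+P\,\gamma^{[ji]}_k)$, whose DoF contribution is $1$ whenever $\gamma^{[ji]}_k>0$.

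The crux is therefore to verify $\gamma^{[ji]}_k>0$ for almost every channel realization. This is already contained in the short-term analysis: the established diversity gain of $2$ gives $P(\gamma^{[ji]}_k<\epsilon)=O(\epsilon^2)$ as $\epsilon\to 0$, hence $P(\gamma^{[ji]}_k=0)=0$. Alternatively, one can argue directly: the entries of $\hat{\mb{H}}$ are rational functions of the i.i.d.\ $\mc{CN}(0,1)$ channel coefficients whose denominators, being squared norms of almost-surely full-rank equivalent channel blocks, vanish only on a null set, and one checks (e.g.\ by perturbing the channels in a generic way) that $\hat{\mb{H}}$ does not vanish identically; since a non-identically-zero real-analytic function vanishes only on a set of Lebesgue measure zero, $\hat{\mb{H}}$ is almost surely of full rank, so $\gamma^{[ji]}_k>0$ almost surely. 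Consequently, for almost every channel realization, each of the eight streams achieves rate $\log(1+P\,\gamma^{[ji]}_k)$ with $\lim_{P\to\infty}\log(1+P\,\gamma^{[ji]}_k)/\log P=1$, i.e., one DoF per stream.

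Combining the achievability lower bound $\frac{8}{3}$ (eight DoF over three channel uses, for a.e.\ channel) with the converse upper bound $\frac{8}{3}$ shows that the proposed Alamouti-coded alignment scheme achieves the maximum DoF gain $\frac{8}{3}$ for the $2\times 2$ double-antenna X channel. I expect the only genuine obstacle to be the almost-sure full-rank claim for $\hat{\mb{H}}$, required uniformly over the four decoding procedures at both receivers; this is settled either by invoking the short-term diversity result or by the genericity argument above, while the exactness of the interference removal (Step~1) and the $\log P$-scaling of the adapted rate are routine.
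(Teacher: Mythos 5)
Your proposal is correct, and its overall skeleton (interference is exactly cancellable by construction; the desired streams are almost surely resolvable; eight streams over three channel uses give $\frac{8}{3}$; the converse is the outerbound of \cite{JaSh08}) matches the paper's. The key technical step is handled differently, though. The paper does not follow the IC receiver chain at all: it argues that the four desired symbols at a receiver occupy temporal dimensions disjoint from the aligned interference, and then proves that the $4\times 4$ matrix \eqref{eq-matrix} of their equivalent channel vectors is almost surely full rank, by observing that its determinant is a polynomial in the independent entries $\tilde{h}^{[j1]}_{mn}$ and exhibiting an explicit assignment ($\tilde{h}^{[11]}_{11}=\tilde{h}^{[21]}_{12}=1$, the rest zero) at which the matrix is the identity, so the determinant is not the zero polynomial. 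This is receiver-agnostic and self-contained. You instead track the specific two-step IC receiver down to the $2\times 2$ Alamouti matrix $\hat{\mb{H}}$ of \eqref{eq-remainedeq} and establish positivity of the resulting SNR either by importing the diversity-$2$ result of Appendix~\ref{appen1} (from $P(\gamma<\epsilon)=O(\epsilon^2)$ you get $P(\gamma=0)=0$, which is rigorous but creates a dependency on a separately proven theorem) or by a genericity argument on $\hat{\mb{H}}$ whose ``does not vanish identically'' step you leave unverified --- this is exactly the point where the paper's explicit identity-matrix substitution is the cleaner device, and it is slightly harder to exhibit such a point for $\hat{\mb{H}}$ (a difference of products of normalized Alamouti blocks) than for \eqref{eq-matrix}. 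Your route buys a tighter connection to the actual decoder and reuses existing machinery; the paper's buys independence from the receiver design and a one-line witness for the nonvanishing determinant. Either way the conclusion stands.
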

\begin{proof}
Since interferring symbols are aligned by the design and appear in
different temporal dimensions compared to the desired symbol (At
Receiver 1, the desired symbols are received in time Slots 1 and 2,
and interferring symbols are received in time Slots 2 and 3), the
desired symbols can be decoupled from the interferring symbols.
Then, it is sufficient to show the linear independence among the
desired symbols. We only show the linear independence at Receiver 1,
since the channel matrix of the four desired symbols has the same
structure at Receiver 2. We need to prove that the following
$4\times 4$ matrix has full rank
\begin{align}\label{eq-matrix}
\left[\begin{array}{cccc}\tilde{h}^{[11]}_{11}&\tilde{h}^{[11]}_{21}&\tilde{h}^{[21]}_{11}&\tilde{h}^{[21]}_{21}\\ -\tilde{h}^{[11]*}_{21}&\tilde{h}^{[11]*}_{11}&-\tilde{h}^{[21]*}_{21}&\tilde{h}^{[21]*}_{11}\\
\tilde{h}^{[11]}_{12}&\tilde{h}^{[11]}_{22}&\tilde{h}^{[21]}_{12}&\tilde{h}^{[21]}_{22}\\
-\tilde{h}^{[11]*}_{22}&\tilde{h}^{[11]*}_{12}&-\tilde{h}^{[21]*}_{22}&\tilde{h}^{[21]*}_{12}\end{array}\right].
\end{align}
This is straightforward since the determinant of the above matrix is
a polynomial function of eight entries $\tilde{h}^{[j1]}_{mn}$ with
$j,m,n\in \{1,2\}$. Recall that $\tilde{h}^{[11]}_{mn}$ depends on
channel matrices $\mb{H}^{[11]}$ and $\mb{H}^{[12]}$, while
$\tilde{h}^{[21]}_{mn}$ depends on $\mb{H}^{[21]}$ and
$\mb{H}^{[22]}$, with all channel matrices being independently
drawn. The equivalent channels $\tilde{h}^{[11]}_{mn}$ are
independent from $\tilde{h}^{[21]}_{mn}$. Then, the determinant
polynomial is either 0 or non-zero for all values of
$\tilde{h}^{[j1]}_{mn}$ with probability 1\cite{Linear05}. When
$\tilde{h}^{[11]}_{11}=\tilde{h}^{[21]}_{12}=1$ and
$\tilde{h}^{[11]}_{21}=\tilde{h}^{[11]}_{12}=\tilde{h}^{[11]}_{22}=\tilde{h}^{[21]}_{11}=\tilde{h}^{[21]}_{21}=\tilde{h}^{[21]}_{22}=0$,
the matrix in \eqref{eq-matrix} becomes an identity matrix and full
rank. Thus, the determinant is not a zero polynomial and the matrix
in \eqref{eq-matrix} is full rank with probability 1.

For each data stream $s_k^{[ji]}$, a rate that grows linearly with
$\log P$ can be reliably supported. Since $8$ streams are sent over
the network in $3$ channel uses, the proposed scheme achieves the
DoF gain of $\frac{8}{3}$. The outerbound on the DoF gain of the
$2\times 2$ double-antenna X channel was characterized in
\cite{JaSh08} to be $\frac{8}{3}$. Therefore, the proposed scheme
achieves the maximum DoF gain.
\end{proof}

\section{Alamouti-coded Transmission for Cellular
Networks}\label{Sec-Cellular} In this section, we discuss two types
of cellular networks: the IMAC and IBC networks\cite{Suh08}, where
interference from a neighboring cell degrades in-cell communication.
Again, the use of Alamouti codes together with IA can bring the
maximum transmission rate and a diversity gain of $2$. We explain
the network models and show the maximum DoF gain in Subsection
\ref{subsec-model}. Since the X channel is a special case of the
IMAC, we briefly describe its transmission in Subsection
\ref{subsec-IMAC}. Transmission in the IBC is more challenging
compared to the IMAC because of the required designs of imperfect
alignment. Description for alignment in IBC is contained in
Subsection \ref{subsec-interBC}. Regarding channel information, each
MS requires only the knowledge of the interferring link connected to
itself, and each base station (BS) needs channel information within
its cell as well as the knowledge of its MSs' beamformers.

\subsection{The IMAC and IBC network models}\label{subsec-model}

Consider a two-cell IMAC as illustrated in the left side of
Fig.~\ref{fig-InterBC}. In each cell, one BS serves two MSs. All
nodes are equipped with two antennas. In the IMAC, we can use the
receiver's index for the cell index, since there is only one
receiver in each cell. In Cell $i$, transmitter $j$ has independent
symbols $s_k^{[ji]}$ to send to Receiver $i$, where $i,j \in
\{1,2\}$. The desired links are described by channel matrix
$\mb{H}^{[ji]}$, where $\mb{H}^{[ji]}\in \mathds{C}^{2\times 2}$.
Due to the simultaneous transmission, Cell 1 creates co-channel
interference to Cell 2, and similarly does Cell 2 to Cell 1. The
interferring link from Transmitter $j$ to Cell $i$ is described by
channel matrix $\mb{I}^{[ji]}$, where $\mb{I}^{[ji]}\in
\mathds{C}^{2\times 2}$. We assume that all entries in channel
matrices have i.~i.~d.~$\mc{CN}(0,1)$ distribution, and remain
constant during the transmission. The reciprocal channel of the IMAC
is an IBC, where the directions of communication are reversed.
Contrary to the IMAC, in the IBC, we can use the transmitter's index
for the cell index, since there is only one transmitter in each
cell. Transmitter $j$ sends independent symbols $s_{k}^{[ji]}$ to
Receiver $i$ in Cell $j$ through link $\mb{H}^{[ji]}$, and
simultaneously interferes User $i$ in the other Cell $\bar{j}$
through link $\mb{I}^{[\bar{j}i]}$. We can use similar notations as
that of the IMAC for the IBC with an exchange of the cell and user
indices. The IBC and adopted notations are shown in the right side
of Fig.~\ref{fig-InterBC}.

IA is considered for the IMAC in \cite{Suh08} and the IBC in
\cite{Suh11,Shin11}. These two channel models are introduced for
frequency selective channels in \cite{Suh08}, where the duality
between these two channels is also demonstrated. Transmission in a
two-cell IBC is studied in \cite{Shin11} with the number of BS
antennas larger than the number of receive antennas. Our paper
considers a MIMO setting where all nodes have equal number of
antennas. First, we show the outerbound on the DoF gains. In the
proof, we assume that $s_{k}^{[ji]}$ operates in the long-term
region and carries one DoF gain.

\begin{theorem}
For a two-cell IBC with two users in each cell and two antennas at
each node, let $d^{[ji]}$ be the DoF gain sent from Transmitter $j$
to Receiver $i$ in Cell $j$. The DoF gain region $\mc{D}^{\mr{IBC}}$
is
\begin{align}\label{eq-R1}
d^{[11]}+d^{[21]}+d^{[22]}\le 2, \\
\label{eq-R12} d^{[12]}+d^{[21]}+d^{[22]}\le 2,
\\ \label{eq-R2} d^{[21]}+d^{[11]}+d^{[12]}\le 2, \\ \label{eq-R22}
d^{[22]}+d^{[11]}+d^{[12]}\le 2.
\end{align}
\end{theorem}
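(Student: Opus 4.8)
The plan is to establish the region claimed in \eqref{eq-R1}--\eqref{eq-R22} by proving both directions: a converse showing that every achievable DoF tuple obeys the four inequalities, and an achievability argument showing that every tuple satisfying them (with $d^{[ji]}\ge 0$) is attainable. For the converse I would handle each inequality separately with a genie-aided single-receiver bound, exploiting that the right-hand side $2$ equals the antenna count at one node. Per channel use, write the observation at User $i$ in Cell $j$ as $\mb{y}^{[ji]}=\mb{H}^{[ji]}\mb{x}^{[j]}+\mb{I}^{[ji]}\mb{x}^{[\bar j]}+\mb{n}^{[ji]}$, where $\mb{x}^{[j]}\in\mathds{C}^{2}$ is the transmit vector of BS $j$ and is a function of only its two messages $W^{[j1]},W^{[j2]}$.

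To prove \eqref{eq-R2}, $d^{[21]}+d^{[11]}+d^{[12]}\le 2$, I would give User 1 of Cell 2 a genie carrying the message $W^{[22]}$. That receiver already decodes its own $W^{[21]}$ reliably; since $\mb{x}^{[2]}$ is determined by $(W^{[21]},W^{[22]})$, it reconstructs $\mb{x}^{[2]}$ and cancels $\mb{H}^{[21]}\mb{x}^{[2]}$ from $\mb{y}^{[21]}$, leaving $\mb{I}^{[21]}\mb{x}^{[1]}+\mb{n}^{[21]}$. As $\mb{I}^{[21]}$ is almost surely an invertible $2\times 2$ matrix, this residual is an invertible noisy observation of $\mb{x}^{[1]}$, so the receiver also recovers $W^{[11]}$ and $W^{[12]}$. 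Hence a single two-antenna receiver decodes $W^{[21]},W^{[11]},W^{[12]}$, and a Fano/cut-set bound applied to its two-dimensional observation forces their DoF sum to be at most $2$. The other three inequalities follow by symmetry---swapping the two users within a cell and swapping the two cells---by handing the complementary in-cell message as genie to the corresponding receiver.

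For achievability I would use that the set $P$ of nonnegative tuples obeying \eqref{eq-R1}--\eqref{eq-R22} is a convex polytope and that the DoF region is closed under time-sharing, so it suffices to attain each vertex of $P$. Enumerating the vertices---by testing which four of the eight bounding constraints (the four inequalities together with the four sign constraints) are linearly independent---gives exactly six points: the origin, the four ``axis'' points that are the permutations of $(2,0,0,0)$, and the fully symmetric point $(2/3,2/3,2/3,2/3)$ with total DoF $8/3$. Each axis point is immediate, since one BS can send two independent streams to a single in-cell user through its generic full-rank $2\times 2$ channel while the other BS stays silent; the symmetric point is exactly what the Alamouti-aligned scheme of Subsection \ref{subsec-interBC} delivers, namely two symbols to each of the four users over three channel uses. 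Time-sharing among these six corner schemes then realizes $\mr{conv}(P)=P$, which completes achievability.

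The main obstacle is the converse chain rather than the accounting. One must check that after a receiver removes its own cell's contribution with the genie, the \emph{interference} channel it is left with is square and generically invertible, so that the \emph{entire} transmit vector of the other cell---and thus both of that cell's messages---is decodable at a node with only two antennas; this is what makes the bound as tight as $2$ instead of $4$. A naive argument that merely lets the two in-cell users cooperate would only yield the loose sum bound $d^{[11]}+d^{[12]}+d^{[21]}+d^{[22]}\le 4$. A secondary item is to confirm that the six listed points are indeed all the vertices of $P$, so that no facet of the polytope is left uncovered by the time-sharing construction.
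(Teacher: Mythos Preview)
Your proposal is correct but takes a different route from the paper, and actually goes further. The paper proves only the converse and does so by reduction: it deletes one in-cell receiver, lets the two receivers in the other cell cooperate, obtains a $Z$-channel, and then invokes the known $Z$-channel DoF bound from \cite{JaSh08} (the DoF into a two-antenna receiver plus the DoF out of a two-antenna transmitter is at most $2$). Your converse instead works directly on a single receiver with a genie message and the channel-invertibility trick; this is self-contained and avoids the external citation, but---as you rightly flag as the main obstacle---the step ``so the receiver also recovers $W^{[11]}$ and $W^{[12]}$'' is not immediate from having an invertible noisy copy of $\mb{x}^{[1]}$. You must argue that the genie-aided receiver can \emph{reconstruct}, within bounded noise enhancement, the observations $\mb{y}^{[11]}$ and $\mb{y}^{[12]}$ of the intended Cell-$1$ receivers (using the known $\mb{x}^{[2]}$ and the generic channel matrices), so that Fano's inequality transfers their decodability; without this simulation step the claim does not follow. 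You also supply an achievability argument via vertex enumeration and time-sharing, which the paper's proof of this theorem omits entirely (the paper uses the inequalities only as an outer bound and establishes the sum-DoF point $8/3$ separately through the scheme of Subsection~\ref{subsec-interBC}). Your list of six vertices---the origin, the four axis points $(2,0,0,0)$ and permutations, and the symmetric $(2/3,2/3,2/3,2/3)$---is correct, and the appeal to the Alamouti-aligned construction for the symmetric corner is the right move.
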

\begin{proof}
The proof is similar to that of the outerbound on X
channels\cite{JaSh08}. Since the network is symmetric for each cell
and each receiver, we only show inequality \eqref{eq-R1} and the
other three inequalities hold by similar arguments. We argue that
the DoF gain region
$\underset{\mc{D}^{\mr{IBC}}}{\max}\left(d^{[11]}+d^{[21]}+d^{[22]}\right)$
can be outerbounded by those of two channels illustrated in
Fig.~\ref{fig-IBCOuterbound}. The first outerbound is a modified IBC
without Receiver 2 in Cell 1. BS1 sends messages only to Receiver 1.
Obviously, any reliable coding schemes in the IBC can be used
reliably in the modified IBC. Then, let $\mc{D}^{\mr{IBC}'}$ denote
the DoF gain regions of the modified IBC, we have
$\underset{\mc{D}^{\mr{IBC}}}{\max}\left(d^{[11]}+d^{[21]}+d^{[22]}\right)\le
\underset{\mc{D}^{\mr{IBC}'}}{\max}\left(d^{[11]}+d^{[21]}+d^{[22]}\right)$.
We can further outerbound the DoF region of the modified IBC using
the Z channel by allowing receivers in Cell 2 to cooperate (right
side of Fig.~\ref{fig-IBCOuterbound}). This is because any reliable
coding schemes for the modified IBC can be used in the Z channel by
adding interference at receivers in Cell 2 and decoding as if R3 and
R4 are distributed. Let the DoF gain region of the Z channel be
$\mc{D}^{\mr{Z}}$. From Corollary 1 in \cite{JaSh08}, we have
$\underset{\mc{D}^{\mr{Z}}}{\max}\left(d^{[11]}+d^{[21]}+d^{[22]}\right)\le
2$, since both BS2 and R1 have two antennas. It follows
$\underset{\mc{D}^{\mr{IBC}}}{\max}\left(d^{[11]}+d^{[21]}+d^{[22]}\right)\le
\underset{\mc{D}^{\mr{Z}}}{\max}\left(d^{[11]}+d^{[21]}+d^{[22]}\right)\le
2$.
\end{proof}
Due to the duality between the IMAC and the IBC, the same DoF gain
region holds for the IMAC. Combing \eqref{eq-R1}, \eqref{eq-R12},
\eqref{eq-R2}, \eqref{eq-R22} results in
$d^{[11]}+d^{[12]}+d^{[21]}+d^{[22]}\le \frac{8}{3}$.

\subsection{Transmission methods in the IMAC}\label{subsec-IMAC}

The maximum rate for the considered IMAC is $\frac{8}{3}$ symbols
per channel use. Noticing that the double-antenna $2\times 2$ X
channel is a special scenario of the two-cell IMAC when
$\mb{I}^{[ji]}=\mb{H}^{[j\bar{i}]}$. Then, it is straightforward to
use the method we have proposed for the X channel for the two-cell
IMAC. Specifically, two symbols $k\in\{1,2\}$, encoded in Alamouti
codes, are transmitted in three channel uses. Transmission in Cell
$1$ occurs in the first two time slots, while transmission in Cell
$2$ occurs in the last two time slots. For Transmitter $j$ in Cell
$i$, the normalized inversion of $\mb{I}^{[ji]}$ is used as the
alignment precoder. Then, four interferring symbols are aligned into
two dimensions. Since the X channel is a special case of the
considered IMAC, a diversity gain of 2 is achievable at the maximum
rate of $\frac{8}{3}$ symbols per channel use. Diversity analysis
for the IMAC using the proposed method is similar to Theorem
\ref{thm-div}.

\subsection{Transmission methods in the IBC}\label{subsec-interBC}
In what follows, we discuss the extension to the two-cell IBC. By
duality of reciprocal channels, the maximum rate of the two-cell IBC
is also $\frac{8}{3}$ symbols per channel use. Let the transmission
duration $T$ be three channel uses. To achieve the maximum rate,
each transmitter sends two symbols to each receiver. In total, $8$
symbols are transmitted over the network in three channel uses,
which amounts to the rate of $\frac{8}{3}$ symbols per channel use.
Since each receiver is equipped with two antennas and receives in
three time slots, a six-dimensional signal space is created. Each
receiver intends to decode two symbols and leaves the remaining
four-dimensional subspace for six interfering symbols (two symbols
are for the other receiver in the same cell, i.e., intra-cell
interference, and four symbols are from the other cell, i.e.,
inter-cell interference). Thus, we need an alignment design that
aligns six symbols in four-dimensional subspace. Such an imperfect
alignment design cannot be trivially extended from the proposed
method for X channels, where interference is completely aligned.

We use the method constructing a dual system from the original
system as proposed in \cite{Li12}. The methodology has been used to
design the dual Alamouti codes and the downlink IC method, where
receiver processing is totally blind of channel information. The
constructed scheme can bring to the dual system the same diversity
gain as in the original system. We use the transmission method in
the IMAC as the original system to derive its dual system. The
derivation is involved, and we directly present the transmission
method in the two-cell IBC. Note that a diversity gain of 2 is
achievable for the dual system, following the definition of dual
systems with ZF designs (Definition 1 and Proposition 1 in
\cite{Li12}).

The system diagram is shown in Fig.~\ref{fig-IBCStr0}. Let the
transmit block be $\mb{X}^{[j]}$, where $\mb{X}^{[j]}\in
\mathds{C}^{3\times 2}$. The receive block at Receiver $i$ in Cell
$j$ can be written as
\begin{align}
\mb{Y}^{[ji]}=\mb{X}^{[j]}\mb{H}^{[ji]}+\mb{X}^{[\bar{j}]}\mb{I}^{[ji]}+\mb{W}^{[ji]},
\end{align}
where $\mb{W}^{[ji]}\in \mathds{C}^{3\times 2}$ denotes the
AWGN matrix. Different from the IMAC, we use the inversion of
the interferring link as the receive beamforming matrix
\begin{align}\label{eq-recbf}
\underset{\tilde{\mb{Y}}^{[ji]}}{\underbrace{\mb{Y}^{[ji]}\left(\mb{I}^{[ji]}\right)^{-1}}}=\mb{X}^{[j]}\underset{\tilde{\mb{H}}^{[ji]}}{\underbrace{\mb{H}^{[ji]}\left(\mb{I}^{[ji]}\right)^{-1}}}+\mb{X}^{[\bar{j}]}+\underset{\tilde{\mb{W}}^{[ji]}}{\underbrace{\mb{W}^{[ji]}\left(\mb{I}^{[ji]}\right)^{-1}}}.
\end{align}
By such receive beamforming matrices, the equivalent interferring
links are identical at both receivers in one cell. This helps the
design of alignment precoder, as will be explained later. The
transmitter design is based on the equivalent channel matrix
$\tilde{\mb{H}}^{[ji]}\in\mathds{C}^{2\times 2}$. Each transmitter
collects two symbols $s_k^{[ji]}$ ($k\in \{1,2\}$), modulated by PSK
constellations, for each receiver in the cell. The symbols are
encoded using Alamouti codes followed by linear precoding as
\begin{align}\label{eq-precode}
\left[\begin{array}{cc}x_{11}^{[j]}&x_{12}^{[j]}\\x_{21}^{[j]}&
x_{22}^{[j]}\end{array}\right]=\underset{\mb{S}^{[j1]}}{\underbrace{\left[\begin{array}{cc}s_{1}^{[j1]}&s_{2}^{[j1]}\\-s_{2}^{[j1]*}&s_{1}^{[j1]*}\end{array}\right]}}\mb{P}^{[j1]}
+\underset{\mb{S}^{[j2]}}{\underbrace{\left[\begin{array}{cc}s_{1}^{[j2]}&s_{2}^{[j2]}\\-s_{2}^{[j2]*}&s_{1}^{[j2]*}\end{array}\right]}}\mb{P}^{[j2]},
\end{align}
where $\mb{P}^{[ji]} \in \mathds{C}^{2\times 2}$ are the precoding
matrix for Receiver $i$ in Cell $j$. We use the precoding matrices
from the downlink IC method\cite{Li12}. Let the $(m,n)$the entry of
$\tilde{\mb{H}}^{[ji]}$ be $\tilde{h}^{[ji]}_{mn}$. The matrix
$\mb{P}^{[ji]}$ is designed as
\begin{align}\label{eq-precoding}
\mb{P}^{[ji]}=\alpha^{[ji]}\left(
\frac{\hat{\mb{H}}^{[ji]*}_1\hat{\mb{H}}^{[j\bar{i}]}_1}{\left\|\tilde{\mb{h}}^{[j\bar{i}]}_1\right\|^2}-
\frac{\hat{\mb{H}}^{[ji]*}_2\hat{\mb{H}}^{[j\bar{i}]}_2}{\left\|\tilde{\mb{h}}^{[j\bar{i}]}_2\right\|^2}\right)
\left[\begin{array}{cc}\frac{\tilde{h}^{[j\bar{i}]*}_{11}}{\left\|\tilde{\mb{h}}^{[j\bar{i}]}_1\right\|^2}&-\frac{\tilde{h}^{[j\bar{i}]*}_{21}}{\left\|\tilde{\mb{h}}^{[j\bar{i}]}_2\right\|^2}\\
\frac{\tilde{h}^{[j\bar{i}]*}_{12}}{\left\|\tilde{\mb{h}}^{[j\bar{i}]}_1\right\|^2}&-\frac{\tilde{h}^{[j\bar{i}]*}_{22}}{\left\|\tilde{\mb{h}}^{[j\bar{i}]}_2\right\|^2}\end{array}\right],
\end{align}
where $\alpha^{[ji]}\in \mathds{R}$ denotes a power control
parameter for Receiver $i$ in Cell $j$,
$\tilde{\mb{h}}^{[ji]}_m$ denotes the $m$th row in
$\tilde{\mb{H}}^{[ji]}$, and
\begin{align}\label{eq-ala}
\hat{\mb{H}}_m^{[ji]}=\left[\begin{array}{cc}\tilde{h}^{[ji]}_{m1}&\tilde{h}^{[ji]}_{m2}\\-\tilde{h}^{[ji]*}_{m2}&\tilde{h}^{[ji]*}_{m1}\end{array}\right],\
m\in \{1,2\}.
\end{align}
For the details behind the derivation of the designs in
\eqref{eq-precoding}, the interested reader is referred to
\cite{Li12}. Here, we only explain how alignment is created. The
symbols $x_{tm}^{[ji]}$ in \eqref{eq-precode} are rearranged to
generate the transmit block $\mb{X}^{[j]}$,
\begin{align}\label{eq-alignment}
\mb{X}^{[1]}=\left[\begin{array}{cc}x^{[1]}_{11}&x^{[1]}_{12}\\x^{[1]}_{21}& x^{[1]}_{22}\\
x^{[1]*}_{22}&-x^{[1]*}_{21}
\end{array}\right],\
\mb{X}^{[2]}=\left[\begin{array}{cc}-x^{[2]*}_{22}&x^{[2]*}_{21}\\x^{[2]}_{21}& x^{[2]}_{22}\\
x^{[2]}_{11}&x^{[2]}_{12}
\end{array}\right].
\end{align}
The four entries in the left-side of \eqref{eq-precode} carry four
independent symbols. Recall that the vertical dimension of
$\mb{X}^{[j]}$ refers to the temporal dimension. From
\eqref{eq-alignment}, Transmitter $1$ sends four symbols in the
first two time slots. In time Slot $3$, redundant symbols are
transmitted to make the submatrix in time Slots $2$ and $3$ have the
\emph{swapped Alamouti structure}, i.e.,
\begin{align}\label{eq-swapal}
\left[\begin{array}{cc}a&b\\ b^*&-a^*\end{array}\right],
\end{align}
which can be obtained by swapping the columns of an Alamouti matrix.
Transmitter $2$ sends four symbols in the last two time slots. In
time Slot $1$, redundant symbols are transmitted to make the
submatrix in time Slots $1$ and $2$ also have the swapped Alamouti
structure. It will be shown that the swapped Alamouti structure
aligns the interference as well.

Let us further discuss receiver operations. Let the $(t,n)$th entry
of $\tilde{\mb{Y}}^{[ji]}$ in \eqref{eq-recbf} be
$\tilde{y}^{[ji]}_{tn}$. The receivers in Cell $1$ extract useful
symbols using signals received in the first two time slots as
\begin{align}\label{eq-result1}
\left[\begin{array}{cc}\hat{y}^{[1i]}_1&
\hat{y}^{[1i]}_2\end{array}\right]=\left[\begin{array}{cc}\tilde{y}^{[1i]}_{11}+\tilde{y}^{[1i]*}_{22}&
\tilde{y}^{[1i]}_{12}-\tilde{y}^{[1i]*}_{21}\end{array}\right],\
i\in \{1,2\}.
\end{align} In Cell $2$, receivers calculate
$\left[\begin{array}{cc}\hat{y}^{[2i]}_1&
\hat{y}^{[2i]}_2\end{array}\right]=\left[\begin{array}{cc}\tilde{y}^{[2i]}_{31}+\tilde{y}^{[2i]*}_{22}&
\tilde{y}^{[2i]}_{32}-\tilde{y}^{[2i]*}_{21}\end{array}\right],\
i\in \{1,2\} $ using signals received in the last two time slots.
Decoding of symbol $s_k^{[ji]}$ is performed by $\underset{s}{\max}
\hat{y}^{[ji]}_k s^*.$ The simple receiver operations are due to the
precoder designs in \eqref{eq-precode}. From the receiver operations
in \eqref{eq-recbf}, \eqref{eq-result1}, and the decoding, only the
knowledge of $\mb{I}^{[ji]}$ is required at Receiver $i$ in Cell
$j$. Transmitter operations are based on $\tilde{\mb{H}}^{[ji]}$.
Then, the knowledge of $\mb{H}^{[ji]}$ and $\mb{I}^{[ji]}$ for $i\in
\{1,2\}$ is required at Transmitter $j$.

\subsubsection{Alignment pattern}In what follows, we explain how the proposed method aligns six symbols in a four-dimensional subspace and
how Alamouti designs are used to protect desired symbols. First, we
introduce some intermediate variables to simplify notations. Note
that from \eqref{eq-ala} and \eqref{eq-precode}, both the matrices
$\hat{\mb{H}}_m^{[ji]}$ and $\mb{S}^{[ji]}$ have the Alamouti
structure. Since matrix multiplication and addition are closed for
two Alamouti matrices, we can define $c_{k}^{[ji]}\in \mathds{C}$
\begin{align}\label{eq-rotation}
\left[\begin{array}{cc}c_{1}^{[ji]}&c_{2}^{[ji]}\\-c_{2}^{[ji]*}&c_{1}^{[ji]*}\end{array}\right]=\alpha^{[ji]}\mb{S}^{[ji]}\left(
\frac{\hat{\mb{H}}^{[ji]*}_1\hat{\mb{H}}^{[j\bar{i}]}_1}{\left\|\tilde{\mb{h}}^{[j\bar{i}]}_1\right\|^2}-
\frac{\hat{\mb{H}}^{[ji]*}_2\hat{\mb{H}}^{[j\bar{i}]}_2}{\left\|\tilde{\mb{h}}^{[j\bar{i}]}_2\right\|^2}\right)
\end{align}
as the rotated symbols of $s_k^{[ji]}$. Without loss of generality,
we only show alignment at receivers in Cell $1$. Using $c_k^{[ji]}$,
we can expand the receive signals of Receiver 1 in \eqref{eq-recbf}
using $c_{k}^{[ji]}$ as
{\setlength{\arraycolsep}{3pt}\begin{align}\nonumber
&\left[\begin{array}{cc}\tilde{y}_{11}^{[11]}&\tilde{y}_{12}^{[11]}\\
\tilde{y}_{21}^{[11]}&\tilde{y}_{22}^{[11]}\end{array}\right]=\left(\sum_{i=1,2}\left[\begin{array}{cc}c^{[1i]}_1&c^{[1i]}_2\\-c^{[1i]*}_{2}&c^{[1i]*}_{1}\end{array}\right]\left[\begin{array}{cc}\frac{\tilde{h}^{[1\bar{i}]*}_{11}}{\left\|\tilde{\mb{h}}^{[1\bar{i}]}_1\right\|^2}&-\frac{\tilde{h}^{[1\bar{i}]*}_{21}}{\left\|\tilde{\mb{h}}^{[1\bar{i}]}_2\right\|^2}\\
\frac{\tilde{h}^{[1\bar{i}]*}_{12}}{\left\|\tilde{\mb{h}}^{[1\bar{i}]}_1\right\|^2}&-\frac{\tilde{h}^{[1\bar{i}]*}_{22}}{\left\|\tilde{\mb{h}}^{[1\bar{i}]}_2\right\|^2}\end{array}\right]\right)\left[\begin{array}{cc}\tilde{h}^{[11]}_{11}&\tilde{h}^{[11]}_{12}\\
\tilde{h}^{[11]}_{21}&\tilde{h}^{[11]}_{22}\end{array}\right]+\left[\begin{array}{cc}-x^{[2]*}_{22}&x^{[2]*}_{21}\\
x^{[2]}_{21}&x^{[2]}_{22}\end{array}\right]+\tilde{\mb{W}}^{[1i]}_1\\
\nonumber
&=\left[\begin{array}{cc}c^{[11]}_1&c^{[11]}_2\\-c^{[11]*}_{2}&c^{[11]*}_{1}\end{array}\right]
\underset{{\mc{H}}^{[11]}}{\underbrace{\left[\begin{array}{cc}\frac{\tilde{h}^{[12]*}_{11}\tilde{h}^{[11]}_{11}}{\left\|\tilde{\mb{h}}^{[12]}_1\right\|^2}-\frac{\tilde{h}^{[12]*}_{21}\tilde{h}^{[11]}_{21}}{\left\|\tilde{\mb{h}}^{[12]}_2\right\|^2}
&\frac{\tilde{h}^{[12]*}_{11}\tilde{h}^{[11]}_{12}}{\left\|\tilde{\mb{h}}^{[12]}_1\right\|^2}-\frac{\tilde{h}^{[12]*}_{21}\tilde{h}^{[11]}_{22}}{\left\|\tilde{\mb{h}}_2^{[12]}\right\|^2}\\
\frac{\tilde{h}^{[12]*}_{12}\tilde{h}^{[11]}_{11}}{\left\|\tilde{\mb{h}}^{[12]}_1\right\|^2}-\frac{\tilde{h}^{[12]*}_{22}\tilde{h}^{[11]}_{21}}{\left\|\tilde{\mb{h}}_2^{[12]}\right\|^2}
&\frac{\tilde{h}^{[12]*}_{12}\tilde{h}^{[11]}_{12}}{\left\|\tilde{\mb{h}}^{[12]}_1\right\|^2}-\frac{\tilde{h}^{[12]*}_{22}\tilde{h}^{[11]}_{22}}{\left\|\tilde{\mb{h}}_2^{[12]}\right\|^2}\end{array}\right]}}
\\
&+\left[\begin{array}{cc}c^{[12]}_1&c^{[12]}_2\\-c^{[12]*}_{2}&c^{[12]*}_{1}\end{array}\right]
\underset{\mc{H}^{[12]}}{\underbrace{\left[\begin{array}{cc}\frac{\left|\tilde{h}^{[11]}_{11}\right|^2}{\left\|\tilde{\mb{h}}_1^{[11]}\right\|^2}-\frac{\left|\tilde{h}^{[11]}_{21}\right|^2}{\left\|\tilde{\mb{h}}^{[11]}_2\right\|^2}
&\frac{\tilde{h}^{[11]*}_{11}\tilde{h}^{[11]}_{12}}{\left\|\tilde{\mb{h}}_1^{[11]}\right\|^2}-\frac{\tilde{h}^{[11]*}_{21}\tilde{h}^{[11]}_{22}}{\left\|\tilde{\mb{h}}^{[11]}_2\right\|^2}\\
\frac{\tilde{h}^{[11]*}_{12}\tilde{h}^{[11]}_{11}}{\left\|\tilde{\mb{h}}_1^{[11]}\right\|^2}-\frac{\tilde{h}^{[11]*}_{22}\tilde{h}^{[11]}_{21}}{\left\|\tilde{\mb{h}}^{[11]}_2\right\|^2}
&\frac{\left|\tilde{h}^{[11]}_{12}\right|^2}{\left\|\tilde{\mb{h}}_1^{[11]}\right\|^2}-\frac{\left|\tilde{h}^{[11]}_{22}\right|^2}{\left\|\tilde{\mb{h}}^{[11]}_2\right\|^2}\end{array}\right]}}
+\left[\begin{array}{cc}-x^{[2]*}_{22}&x^{[2]*}_{21}\\
x^{[2]}_{21}&x^{[2]}_{22}\end{array}\right]+\tilde{\mb{W}}^{[1i]}_1.\label{eq-time12}
\end{align}}
\hspace{-4pt}The matrices ${\mc{H}}^{[11]}\in \mathds{C}^{2\times
2}$ and ${\mc{H}}^{[12]}\in \mathds{C}^{2\times 2}$ are the
equivalent channel matrices for $c_k^{[11]}$ and $c_k^{[12]}$,
respectively. Let the $(m,n)$th entry of ${\mc{H}}^{[11]}$ and
${\mc{H}}^{[12]}$ be $\underline{h}^{[11]}_{mn}$ and
$\underline{h}^{[12]}_{mn}$, respectively. It can be verified that
{\setlength{\arraycolsep}{2pt}\small\begin{align*}
&\underline{h}^{[12]}_{11}+\underline{h}^{[12]*}_{22}=\frac{\left|\tilde{h}^{[11]}_{11}\right|^2}{\left\|\tilde{\mb{h}}_1^{[11]}\right\|^2}-\frac{\left|\tilde{h}^{[11]}_{21}\right|^2}{\left\|\tilde{\mb{h}}^{[11]}_2\right\|^2}+\left(\frac{\left|\tilde{h}^{[11]}_{12}\right|^2}{\left\|\tilde{\mb{h}}_1^{[11]}\right\|^2}-\frac{\left|\tilde{h}^{[11]}_{22}\right|^2}{\left\|\tilde{\mb{h}}^{[11]}_2\right\|^2}\right)^*=\frac{\left|\tilde{h}^{[11]}_{11}\right|^2+\left|\tilde{h}^{[11]}_{12}\right|^2}{\left\|\tilde{\mb{h}}_1^{[11]}\right\|^2}-\frac{\left|\tilde{h}^{[11]}_{21}\right|^2+\left|\tilde{h}^{[11]}_{22}\right|^2}{\left\|\tilde{\mb{h}}^{[11]}_2\right\|^2}=0
\\
&\underline{h}^{[12]}_{12}-\underline{h}^{[12]*}_{21}=\frac{\tilde{h}^{[11]*}_{11}\tilde{h}^{[11]}_{12}}{\left\|\tilde{\mb{h}}_1^{[11]}\right\|^2}-\frac{\tilde{h}^{[11]*}_{21}\tilde{h}^{[11]}_{22}}{\left\|\tilde{\mb{h}}^{[11]}_2\right\|^2}-\left(\frac{\tilde{h}^{[11]*}_{12}\tilde{h}^{[11]}_{11}}{\left\|\tilde{\mb{h}}_1^{[11]}\right\|^2}-\frac{\tilde{h}^{[11]*}_{22}\tilde{h}^{[11]}_{21}}{\left\|\tilde{\mb{h}}^{[11]}_2\right\|^2}
\right)^*=0.
\end{align*}}
\hspace{-3pt}Thus, the matrix ${\mc{H}}^{[12]}$ has the swapped
Alamouti structure that has been defined in \eqref{eq-swapal}. Now,
let us explain the use of the swapped Alamouti structure to pad the
transmit block in \eqref{eq-alignment}. From \eqref{eq-time12}, all
interferring symbols are carried in $c^{[12]}_1, c^{[12]}_2,
x_{11}^{[2]},x_{12}^{[2]},x^{[2]}_{21}$, and $x^{[2]}_{22}$. Note
that in \eqref{eq-time12}, the rotated symbols $c^{[12]}_1$ and
$c^{[12]}_2$ have the Alamouti structure. It can be verified that
multiplying the Alamouti matrix containing $c^{[12]}_1$ and
$c^{[12]}_2$ with the matrix ${\mc{H}}^{[12]}$ still has the swapped
Alamouti structure. Also from \eqref{eq-time12}, the interfering
symbols from Cell 2, i.e., $x^{[2]}_{21},x^{[2]}_{22}$, are placed
in a matrix having the swapped Alamouti structure (it is created by
the padding in \eqref{eq-alignment}). Therefore, all six interfering
symbols are aligned on the swapped Alamouti structure, which
occupies only a two-dimensional subspace in the four-dimensional
signal space (we only consider two receive time slots). Adding
receive signals in time Slot 3 at most expands the dimension of the
interference subspace from two to four. Then, we are able to align
six interferring symbols in a four-dimensional subspace. This
intuitively explains the alignment pattern. To see a complete
picture of the receive signal space, we can expand the receive
signals in time Slot $3$ in \eqref{eq-recbf} using $c_k^{[ji]}$ as
{\setlength{\arraycolsep}{3pt}\begin{align}\nonumber
&\left[\begin{array}{cc}\tilde{y}_{31}^{[11]}&\tilde{y}_{32}^{[11]}\end{array}\right]=
\left[\begin{array}{cc}x^{[1]*}_{22}&-x^{[1]*}_{21}\end{array}\right]\left[\begin{array}{cc}\tilde{h}^{[11]}_{11}&\tilde{h}^{[11]}_{12}\\
\tilde{h}^{[11]}_{21}&\tilde{h}^{[11]}_{22}\end{array}\right]+\left[\begin{array}{cc}
x^{2}_{11}&x^2_{12}\end{array}\right]+\tilde{\mb{w}}_2^{[11]}\\\nonumber
&=\left[\begin{array}{cc}c^{[11]}_{2}&-c^{[11]}_{1}\end{array}\right]\underset{\overline{\mb{H}}^{[11]}}{\underbrace{\left[\begin{array}{cc}\frac{\tilde{h}^{[12]}_{21}}{\left\|\tilde{\mb{h}}^{[12]}_2\right\|^2}&\frac{\tilde{h}^{[12]}_{11}}{\left\|\tilde{\mb{h}}^{[12]}_1\right\|^2}\\
\frac{\tilde{h}^{[12]}_{22}}{\left\|\tilde{\mb{h}}^{[12]}_2\right\|^2}&\frac{\tilde{h}^{[12]}_{12}}{\left\|\tilde{\mb{h}}^{[12]}_1\right\|^2}\end{array}\right]
\left[\begin{array}{cc}\tilde{h}^{[11]}_{11}&\tilde{h}^{[11]}_{12}\\
\tilde{h}^{[11]}_{21}&\tilde{h}^{[11]}_{22}\end{array}\right]}}+\left[\begin{array}{cc}c^{[12]}_{2}&-c^{[12]}_{1}\end{array}\right]\underset{\overline{\mb{H}}^{[12]}}{\underbrace{\left[\begin{array}{cc}\frac{\tilde{h}^{[11]}_{21}}{\left\|\tilde{\mb{h}}^{[11]}_2\right\|^2}&\frac{\tilde{h}^{[11]}_{11}}{\left\|\tilde{\mb{h}}^{[11]}_1\right\|^2}\\
\frac{\tilde{h}^{[11]}_{22}}{\left\|\tilde{\mb{h}}^{[11]}_2\right\|^2}&\frac{\tilde{h}^{[11]}_{12}}{\left\|\tilde{\mb{h}}^{[11]}_1\right\|^2}\end{array}\right]
\left[\begin{array}{cc}\tilde{h}^{[11]}_{11}&\tilde{h}^{[11]}_{12}\\
\tilde{h}^{[11]}_{21}&\tilde{h}^{[11]}_{22}\end{array}\right]}}\\
&\quad +\left[\begin{array}{cc}
x^{[2]}_{11}&x^{[2]}_{12}\end{array}\right]+\tilde{\mb{w}}_2^{[11]}\label{eq-time3}
\end{align}}
\hspace{-3pt}Denote the $(m,n)$th entry of
$\overline{\mb{H}}^{[11]}$ and $\overline{\mb{H}}^{[12]}$ as
$\overline{h}^{[11]}_{mn}$ and $\overline{h}^{[12]}_{mn}$,
respectively. Combining \eqref{eq-time12} and \eqref{eq-time3}, we
can obtain the equivalent vector system equation at Receiver $1$ as
{\setlength{\arraycolsep}{1pt}\begin{align}\nonumber
&\left[\begin{array}{c}\tilde{y}^{[11]}_{11}\\\tilde{y}^{[11]}_{12}\\\tilde{y}^{[11]*}_{21}\\\tilde{y}^{[11]*}_{22}\\\tilde{y}^{[11]}_{31}\\\tilde{y}^{[11]}_{32}\end{array}\right]=
\left[\begin{array}{cc}\underline{h}_{11}^{[11]}&\underline{h}^{[11]}_{21}\\
\underline{h}^{[11]}_{12}&\underline{h}^{[11]}_{22}\\
\underline{h}^{[11]*}_{21}&-\underline{h}^{[11]*}_{11}\\
\underline{h}^{[11]*}_{22}&-\underline{h}^{[11]*}_{12}\\-\overline{h}_{21}^{[11]}&\overline{h}_{11}^{[11]}\\-\overline{h}_{22}^{[11]}&\overline{h}_{12}^{[11]}\end{array}\right]\left[\begin{array}{c}c^{[11]}_1\\c^{[11]}_2\end{array}\right]+\left[\begin{array}{cc}
\underline{h}^{[12]}_{11}&\underline{h}^{[12]}_{21}\\
\underline{h}^{[12]*}_{21}&-\underline{h}^{[12]*}_{11}\\
\underline{h}^{[12]*}_{21}&-\underline{h}^{[12]*}_{11}\\
-\underline{h}^{[12]}_{11}&-\underline{h}^{[12]}_{21}\\-\overline{h}_{[12]}^{21}&\overline{h}_{11}^{[12]}\\-\overline{h}_{22}^{[12]}&\overline{h}_{11}^{[12]}\end{array}\right]\left[\begin{array}{c}c^{[12]}_1\\c^{[12]}_2\end{array}\right]+
\left[\begin{array}{cccc}0&-1&0&0\\1&0&0&0\\1&0&0&0\\0&1&0&0\\0&0&1&0\\0&0&0&1\end{array}\right]\left[\begin{array}{c}x^{[2]*}_{21}\\x^{[2]*}_{22}\\x_{11}^{[2]}\\x_{12}^{[2]}\end{array}\right]+\left[\begin{array}{c}\tilde{w}^{[11]}_{11}\\\tilde{w}^{[11]}_{12}\\\tilde{w}^{[11]*}_{21}\\\tilde{w}^{[11]*}_{22}\\\tilde{w}^{[11]}_{31}\\\tilde{w}^{[11]}_{32}\end{array}\right]\\
&=\left[\begin{array}{cc}\underline{h}_{11}^{[11]}&\underline{h}^{[11]}_{21}\\
\underline{h}^{[11]}_{12}&\underline{h}^{[11]}_{22}\\
\underline{h}^{[11]*}_{21}&-\underline{h}^{[11]*}_{11}\\
\underline{h}^{[11]*}_{22}&-\underline{h}^{[11]*}_{12}\\-\overline{h}_{21}^{[11]}&\overline{h}_{11}^{[11]}\\-\overline{h}_{22}^{[11]}&\overline{h}_{12}^{[11]}\end{array}\right]\left[\begin{array}{c}c^{[11]}_1\\c^{[11]}_2\end{array}\right]+\underset{\mb{Q}}{\underbrace{\left[\begin{array}{cccc}0&-1&0&0\\1&0&0&0\\1&0&0&0\\0&1&0&0\\0&0&1&0\\0&0&0&1\end{array}\right]}}\left(\left[\begin{array}{cc}
\underline{h}^{[12]*}_{21}&-\underline{h}^{[12]*}_{11}\\
-\underline{h}^{[12]}_{11}&-\underline{h}^{[12]}_{21}\\-\overline{h}_{[12]}^{21}&\overline{h}_{11}^{[12]}\\-\overline{h}_{22}^{[12]}&\overline{h}_{11}^{[12]}\end{array}\right]\left[\begin{array}{c}c^{[12]}_1\\c^{[12]}_2\end{array}\right]+
\left[\begin{array}{c}x^{[2]*}_{21}\\x^{[2]*}_{22}\\x_{11}^{[2]}\\x_{12}^{[2]}\end{array}\right]\right)+\left[\begin{array}{c}\tilde{w}^{[11]}_{11}\\\tilde{w}^{[11]}_{12}\\\tilde{w}^{[11]*}_{21}\\\tilde{w}^{[11]*}_{22}\\\tilde{w}^{[11]}_{31}\\\tilde{w}^{[11]}_{32}\end{array}\right].\label{eq-sys0}
\end{align}}
\hspace{-2pt}From \eqref{eq-sys0}, the interfering symbols to
Receiver $1$ $\left(c^{[12]}_1, c^{[12]}_2,
x_{11}^{[2]},x_{12}^{[2]},x^{[2]}_{21},x^{[2]}_{22}\right)$ are
aligned in a four-dimensional subspace spanned by the columns of
$\mb{Q}$. The two desired symbols $c^{[11]}_1$ and $c^{[11]}_2$ are
located in the remaining two-dimensional subspace. The alignment
pattern is illustrated in Fig.~\ref{fig-align}. To cancel the
aligned interference, the receiver discards $\tilde{y}^{[11]}_{31}$
and $\tilde{y}^{[11]}_{32}$, then conducts the calculation in
\eqref{eq-result1}, i.e.,
\begin{align}\label{eq-ZF2}
\left[\begin{array}{c}\tilde{y}^{[11]}_{11}+\tilde{y}^{[11]*}_{22}\\
\tilde{y}^{[11]}_{12}-\tilde{y}^{[11]*}_{21}\end{array}\right]=\left[\begin{array}{cc}\underline{h}_{11}^{[11]}+\underline{h}^{[11]*}_{22}&\underline{h}_{21}^{[11]}-\underline{h}^{[11]*}_{12}\\
\underline{h}^{[11]}_{12}-\underline{h}^{[11]*}_{21}&\underline{h}^{[11]}_{22}+\underline{h}^{[11]*}_{11}\end{array}\right]\left[\begin{array}{c}c^{[11]}_1\\c^{[11]}_2\end{array}\right]+\left[\begin{array}{c}\tilde{{w}}_{11}^{[11]}+\tilde{{w}}_{22}^{[11]*}\\
\tilde{{w}}_{12}^{[11]}-\tilde{{w}}_{21}^{[11]*}\end{array}\right].
\end{align}
Two desired symbols occupy only a two-dimensional subspace with the
equivalent channel matrix having the Alamouti structure\footnote{In
addition, the rotation in \eqref{eq-rotation} diagonalizes the
equivalent channel matrix in \eqref{eq-ZF2}, thus resulting in
symbol-by-symbol decoding. For more details, the interested reader
is referred to Proposition 2 in \cite{Li12}.}. Then, the desired
symbols are protected by orthogonal channel vectors due to the
Alamouti design.

To summarize the key elements of alignment at Receiver $1$ in Cell
$1$, the precoding matrix used in \eqref{eq-precoding} creates an
equivalent channel matrix ${\mc{H}}^{[12]}$ with the swapped
Alamouti structure for the interferring symbol $s^{[12]}_k$.
Transmitter $2$ aligns to this structure by padding the transmit
block $\mb{X}^{[2]}$ in time Slot $1$. By using the inversion of the
interfering link, six interfering symbols $s^{[12]}_k$,
$s^{[21]}_k$, and $s^{[22]}_k$ are able to align in a
four-dimensional subspace at both receivers in one cell. It can be
verified that such alignment also occurs in Cell $2$. Specifically,
alignment is created by the swapped Alamouti structure in time Slots
$2$ and $3$ of $\mb{X}^{[1]}$.

%
%

\section{Simulation results}\label{Sec-Simulation}
In this section, we compare the proposed methods with related
transmission schemes in both the short-term regime and the long-term
regime. Throughout this section, the horizontal axis in all figures
represents SNR measured in dB. Since the noises are normalized and
the transmit power of each user is $P$, the SNR of the network is
$P$.

Simulations in the short-term regime are performed for two network
models. We simulate the average BER performance of the proposed
methods. Since the diversity gain is not changed by using any
channel codes, we simulate an uncoded system for simplicity. The
vertical axis represents the average BER. It is averaged over all
communication directions. The first group of simulations shows the
BER performance of the proposed alignment method using Alamouti
designs in X channels. For comparison, the JaSh scheme\cite{JaSh08}
is included. In addition, we have a new modified JaSh scheme that
has potential for diversity improvement. The modified JaSh scheme
uses Alamouti codes on top of the JaSh scheme. Recall that the JaSh
scheme creates a $2\times 2$ point-to-point channel after removing
the aligned interference and decoupling the symbols from the other
transmitter. The modified JaSh scheme uses an Alamouti code for the
$2\times 2$ channel to improve diversity while providing only half
of the symbol rate of the JaSh scheme. Uncoded symbols $s^{[ji]}_k$
are independently generated from a finite constellation. To achieve
the same bit rate, different modulations are used for the three
methods in Fig.~\ref{fig-XchanNew}. We use BPSK, BPSK, and QPSK
modulations for the proposed scheme, the JaSh scheme, and the
modified JaSh scheme, respectively, to achieve $2/3$ bits per
channel use per pair node (solid curves in Fig.~\ref{fig-XchanNew}).
Also, to include comparison at another bit rate, QPSK, QPSK, and
16PSK modulations are used for the proposed scheme, the JaSh scheme,
and the modified JaSh scheme, respectively, to achieve $4/3$ bits
per channel use per pair node (dashed curves in
Fig.~\ref{fig-XchanNew}).

Our proposed method achieves a diversity gain of 2, whereas the JaSh
scheme achieves a diversity gain of 1. These results verify the
analysis in Subsection \ref{subsec-analysis}. It can be observed
that the diversity benefits bring more than $10$~dB gain at
BER=$10^{-3}$ for both transmission rates. The modified JaSh scheme
cannot bring diversity improvement: only a diversity gain of 1 is
observed from Fig.~\ref{fig-XchanNew}. This is because the $2\times
2$ diagonal channel after removing aligned interference and
decoupling symbols has correlated diagonal entries. The sum of the
achievable SNRs on each channel is upperbounded by a term providing
a diversity of only 1. The proof for the diversity gain of the
modified JaSh scheme is provided in Appendix \ref{appen2}.
Consequently, simply using Alamouti codes on top of the JaSh scheme
cannot bring diversity improvement.

The second group of simulations compares the extended scheme with
the downlink IA \cite{Suh11} in the two-cell IBC. Note that in our
setting, each node has two antennas and two symbols are transmitted
to each receiver; while in \cite{Suh11}, each node has one antenna
and one symbol is transmitted to each receiver. We extend the
downlink IA method in \cite{Suh11} to our double-antenna setting to
achieve the same symbol rate. The system diagram is shown in
Fig.~\ref{fig-IBCStr}. The BS uses two transmit precoders: a random
precoder $\mb{P}$ and a ZF precoder $\mb{B}^{[j]}$ to null out
intra-cell interference. Each receiver utilizes a receive beamformer
$\mb{u}^{[ji]}$ to zero-force inter-cell interference. Specifically,
the BS sends two symbols to each receiver in three symbol
extensions, which creates a six-dimensional signal space. The
receive beamformer $\mb{u}^{[ji]}\in \mathds{C}^{2\times 6}$ rejects
four interferring symbols from the other cell by zero-forcing the
equivalent channel matrix $(\mb{I}_3\otimes \mb{I}^{[ji]})\mb{P}$
and accepts two desired symbols. The entries in the random precoder
$\mb{P}\in \mathds{C}^{6\times 4}$ are assumed
i.~i.~d.~$\mc{CN}(0,1)$ distributed. The ZF precoder
$\mb{B}\in\mathds{C}^{4\times 4}$ cancels the intra-cell
interference by ZF precoding over the equivalent channels
$\mb{u}^{[ji]}(\mb{I}_3\otimes \mb{I}^{[ji]})\mb{P}$. For channel
information requirements, both alignment methods need the knowledge
of the interferring link at the receivers, and the transmitters
require channel information within each cell in addition to the
knowledge of the receive beamformers. Since both alignment methods
have the same symbol rate, BPSK is used to achieve $2/3$ bits per
channel use per receiver (solid curves in Fig.~\ref{fig-IBCNew}),
and QPSK is used to achieve $4/3$ bits per channel use per receiver
(dashed curves in Fig.~\ref{fig-IBCNew}).

Fig.~\ref{fig-IBCNew} exhibits the comparison. Our proposed method
can achieve a diversity gain of 2, which provides an approximate
array gain of $20$~dB at $\mr{BER}=10^{-2}$, compared to the
downlink IA method.

In the long-term regime, we simulate and compare the achievable
ergodic mutual information for the related methods. An
i.~i.~d.~Gaussian codebook is used for each symbol $s_k^{[ji]}$. The
vertical axis represents the sum rate (measured in bits per channel
use) over all communication directions. Figs.~\ref{fig-XCap} and
\ref{fig-IBCCap} show the ergodic mutual information for the X
channel and the IBC, respectively. We can first observe that the
proposed method achieves the same DoF gain as the JaSh scheme in
Fig.~\ref{fig-XCap}, and as the downlink IA method in
Fig.~\ref{fig-IBCCap}. Additionally, in the entire SNR regime, our
proposed method has a better SNR offset compared to the previous
methods. For example, in Fig.~\ref{fig-XCap}, the proposed method
outperforms the JaSh scheme by approximately $3$ bits/channel use at
$\mr{SNR}=25$ dB; in Fig.~\ref{fig-IBCCap}, the proposed method
enjoys approximately $8$ bits/channel use gain over the downlink IA
method at $\mr{SNR}=25$ dB. Similar gains are also achieved in the
low SNR range. For all compared methods, a ZF receiver is used to
cancel the aligned interference as well as decouple the desired
signals. Since our proposed method incorporates orthogonal designs
between the two symbols from the same user, a ZF receiver does not
incur SNR loss when separating these two symbols. On the other hand,
for the previous proposed methods, such an SNR loss occurs during
the symbol separation. This intuitively explains the SNR gain in the
entire SNR range.

\section{Conclusions}\label{Sec-Conclusion}
In this paper, we have proposed a transmission scheme that achieves
the maximum symbol-rate, i.e., $\frac{2}{3}$ from node-to-node, with
high reliability for the double-antenna $2\times 2$ X channel. The
alignment scheme incorporates Alamouti designs before using the
normalized inversion of the cross channel as the transmit beamformer
to align symbols at unintended receivers. Each receiver removes
aligned interference followed by symbol decoupling using IC.
Consequently, a symbol-by-symbol decoding complexity is achieved at
both receivers. Both simulation and analysis demonstrate a diversity
gain of 2 for the symbol-by-symbol decoding in the proposed scheme.
This implies that a diversity gain of higher than 1 is achievable in
the short-term regime, yet simultaneously with the maximum DoF gain
in the long-term regime. The proposed transmission scheme has also
been extended to two cellular networks, the IMAC and IBC, to bring
the maximum-rate transmission with a diversity gain of 2.
Significant BER performance improvement is observed through
simulation compared to the downlink IA method. Further extension to
the two-user X channels with more than 2 antennas at each node is
also doable by sending multiple groups of Alamouti codes for each
communication direction.

We have also identified that designing alignment for diversity is
not straightforward. Using STBCs on top of the previous alignment
method in \cite{JaSh08} can neither bring diversity improvements nor
maintain the maximum DoF gain for the two-user X channel. This calls
for an optimization of existing alignment methods to jointly
consider the DoF gain and the diversity gain.

Note that the considered network has 2 antennas at each node. The
achievable diversity is upperbounded by the corresponding
point-to-point channel. In other words, the maximum diversity gain
for the considered network is $2\times 2=4$. Our proposed scheme
only achieves the full transmit diversity, whereas the receive
diversity gain is only 1. We do not claim that the proposed scheme
is optimal in terms of the diversity gain. Since our proposed scheme
separates the desired symbols by ZF, it is possible to further
improve the receive diversity by a joint-decoding of 4 desired
symbols at each receiver. Since the proposed method needs only four
symbol-by-symbol decodings, the expense of the joint-decoding
algorithm is the increased decoding complexity. We conjecture that
such a joint decoding will result in a diversity gain of 4.

To embed Alamouti codes into alignment, the network is required to
have infinitely many alignment modes, because Alamouti codes are
rotationally invariant. The discussed network models have redundant
transmit dimensions. Our design uses a normalized inversion of the
cross channels (See Eq.\eqref{eq-beamformer}) to constrain the
interference subspace to be an identity matrix. In general, the
interference subspace can be arbitrarily chosen, thus generating
infinitely many alignment modes. Unfortunately, some interference
networks, e.g., the interference channels without symbol extensions,
have finitely many alignment modes at the maximum DoF gain. Thus, it
is not clear how to improve their diversity gains by utilizing
orthogonal designs.

\section*{Acknowledge}
The authors would like to thank Syed A. Jafar for insightful
discussions on interference alignment schemes.

\renewcommand{\baselinestretch}{1}
\bibliographystyle{ieeetran}
\bibliography{IEEEabrv,IA,IC-Relay-TDMA}

\renewcommand{\baselinestretch}{1.8}
\useRomanappendicesfalse
\appendices
\section{Two useful lemmas}
To prove Theorem \ref{thm-prediv}, we need some lemmas.
\begin{lemma}\label{lemma1}
Let the entries of $\mb{F}\in \mathds{C}^{2\times 2}$ be i.~i.~d.
$\mc{CN}(0,1)$ distributed. The following instantaneous normalized
receive SNR
\begin{align}\label{eq-SNR}
\gamma=\frac{1}{\tr\left(\mb{F}^{-1}(\mb{F}^{-1})^*\right)}
\end{align}
provides diversity gain 1.
\end{lemma}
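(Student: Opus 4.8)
The plan is to reduce $\gamma$ to an explicit scalar random variable and then to sandwich its small-ball probability $\Prob(\gamma<\epsilon)$ between constant multiples of $\epsilon$, so that by \eqref{eq-div} the diversity gain equals $1$.

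First I would simplify $\gamma$. For a $2\times2$ matrix, $\mb{F}^{-1}=(\det\mb{F})^{-1}\,\mathrm{adj}(\mb{F})$ and $\|\mathrm{adj}(\mb{F})\|=\|\mb{F}\|$, hence $\tr\big(\mb{F}^{-1}(\mb{F}^{-1})^*\big)=\|\mb{F}\|^2/|\det\mb{F}|^2$ and $\gamma=|\det\mb{F}|^2/\|\mb{F}\|^2$. In terms of the squared singular values $\sigma_1^2\ge\sigma_2^2$ of $\mb{F}$, this is $\gamma=\sigma_1^2\sigma_2^2/(\sigma_1^2+\sigma_2^2)$, the parallel combination of $\sigma_1^2$ and $\sigma_2^2$, so $\sigma_2^2/2\le\gamma\le\sigma_2^2$. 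Thus $\gamma$ is governed by the smallest squared singular value of a square Gaussian matrix, which is precisely the quantity producing diversity $1$ under zero-forcing.

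The crux is a clean distributional representation. Conditioning on the second row $\mb{r}_2$ of $\mb{F}$, the quantity $\det\mb{F}$ is a complex-linear functional of the first row $\mb{r}_1\sim\mc{CN}(0,\mb{I}_2)$, hence conditionally a zero-mean complex Gaussian with variance $\|\mb{r}_2\|^2$; using the rotational invariance of $\mb{r}_1$ to align the relevant direction with a coordinate axis, one obtains
\begin{align*}
\gamma\;\overset{d}{=}\;\frac{\rho\,|g|^2}{|g|^2+|g'|^2+\rho},
\end{align*}
where $g,g'\sim\mc{CN}(0,1)$, the weight $\rho=\|\mb{r}_2\|^2$ has density $\rho e^{-\rho}$ on $(0,\infty)$, and $g,g',\rho$ are mutually independent. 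I expect establishing this joint representation --- controlling $(|\det\mb{F}|^2,\|\mb{F}\|^2)$ together rather than marginally --- to be the main obstacle, since a naive union bound over $\{|\det\mb{F}|^2\text{ small}\}$ and $\{\|\mb{F}\|^2\text{ large}\}$ fails: $\Prob(\|\mb{F}\|^2<1)$ is a fixed positive constant.

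Given this representation, both bounds are short. Because the denominator is at least $\rho$, we have $\gamma\le|g|^2$, so $\Prob(\gamma<\epsilon)\ge\Prob(|g|^2<\epsilon)=1-e^{-\epsilon}\ge\epsilon/2$ for small $\epsilon$. For the matching upper bound I would discard $\{\rho\le2\epsilon\}$, which has probability $O(\epsilon^2)$; on $\{\rho>2\epsilon\}$ the inequality $\gamma<\epsilon$ forces $|g|^2<2\epsilon(1+|g'|^2/\rho)$, whose conditional probability given $(|g'|^2,\rho)$ is at most $2\epsilon(1+|g'|^2/\rho)$, and since $\Exp[1/\rho]=\int_0^\infty e^{-\rho}\,d\rho=1$ this integrates to $O(\epsilon)$. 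Therefore $c_1\epsilon\le\Prob(\gamma<\epsilon)\le c_2\epsilon$ for all small $\epsilon$, giving diversity gain $1$ via \eqref{eq-div}.
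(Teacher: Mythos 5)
Your proof is correct. You and the paper start from the same reformulation: both of you observe that $\tr\left(\mb{F}^{-1}(\mb{F}^{-1})^*\right)=\sigma_1^{-2}+\sigma_2^{-2}$ and hence $\sigma_2^2/2\le\gamma\le\sigma_2^2$, so everything reduces to showing that the smallest squared singular value of a square complex Gaussian matrix has diversity exactly $1$. At that point the paper simply cites an external result for the diversity order of $\sigma_2^2$ and stops, whereas you prove the needed two-sided small-ball estimate from scratch --- and in fact you bypass $\sigma_2^2$ entirely by working directly with $\gamma=|\det\mb{F}|^2/\|\mb{F}\|^2$, conditioning on the second row to get the exact representation $\gamma\overset{d}{=}\rho|g|^2/(|g|^2+|g'|^2+\rho)$, and then reading off $c_1\epsilon\le\Prob(\gamma<\epsilon)\le c_2\epsilon$. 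I checked the details: the conditional Gaussianity of $\det\mb{F}$ with variance $\|\mb{r}_2\|^2$, the independence of $(g,g',\rho)$, the density $\rho e^{-\rho}$ of $\rho$, the bound $\gamma\le|g|^2$ for the lower bound, and the splitting on $\{\rho>2\epsilon\}$ together with $\Exp[1/\rho]=1$ for the upper bound are all sound. What your route buys is a self-contained, citation-free argument with explicit constants ($\Prob(\gamma<\epsilon)\le 4\epsilon+O(\epsilon^2)$); what the paper's route buys is brevity and a statement that generalizes immediately to larger matrices via the known marginal distribution of the smallest singular value. Either is acceptable here.
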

\begin{proof}
Let the singular values of $\mb{F}$ be $\lambda_1$ and $\lambda_2$
such that $\lambda_1\ge \lambda_2$. Eqn.~\eqref{eq-SNR} can be
expanded as
\begin{align*}
\gamma=\frac{1}{\tr\left(\mb{F}^{-1}(\mb{F}^{-1})^*\right)}=\frac{1}{\frac{1}{\lambda_1^2}+\frac{1}{\lambda_2^2}}<\frac{1}{\frac{1}{\lambda_2^2}}=\lambda_2^2.
\end{align*}
Since the smaller singular value $\lambda_2$ carries diversity 1
only\cite{DivSVD}, the diversity gain of $\gamma$ is upperbounded by
1. Further, we can lowerbound $\gamma$ as
\begin{align*}
\gamma=\frac{1}{\frac{1}{\lambda_1^2}+\frac{1}{\lambda_2^2}}>\frac{1}{\frac{1}{\lambda_2^2}+\frac{1}{\lambda_2^2}}=\frac{\lambda_2^2}{2}.
\end{align*}
Thus, the instantaneous normalized receive SNR in \eqref{eq-SNR} is
lowerbounded by a term with diversity 1. Therefore, the achievable
diversity for $\gamma$ is exactly 1.
\end{proof}

\begin{lemma}\label{lemma2}
Consider the following $N\times 1$ vector system equation
\begin{align}
\mb{y}=\mb{h}_1s_1+\sum_{i=2:M}\mb{h}_is_i+\mb{w},
\end{align}
where $\mb{w}\in\mathds{C}^{N\times 1}$ have i.~i.~d.~$\mc{CN}(0,1)$
distributed entries, and the channel vectors
$\mb{h}_i\in\mathds{C}^{N\times 1}$ are linearly independent. Let
$\mb{Q}\in \mathds{C}^{(N-M+1)\times N}$ be any full-rank ZF matrix
such that
\begin{align}\label{eq-ZF}
\mb{Q}\mb{h}_i=\mb{0}_{N-M+1}, \ i\in\{2,\ldots,M\}.
\end{align}
After ZF, the equivalent channel vector is $\mb{Q}\mb{h}_1$ and the
noise covariance matrix is $\mb{Q}\mb{Q}^*$. For any designs of
$\mb{Q}$, the resulting instantaneous normalized receive SNR after
ZF is
\begin{align}
\gamma=\mb{h}_1^*\mb{Q}^*\left(\mb{Q}\mb{Q}^*\right)^{-1}\mb{Q}\mb{h}_1=\mb{h}_1^*\mb{\Sigma}\mb{h}_1,
\end{align}
where $\mb{\Sigma}$ is the projection matrix to the null space of
$\left[\mb{h}_2,\ldots,\mb{h}_M\right]$. The instantaneous
normalized receive SNR is independent of the designs of $\mb{Q}$.
\end{lemma}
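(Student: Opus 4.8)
The plan is to identify the matrix $\mb{Q}^*\left(\mb{Q}\mb{Q}^*\right)^{-1}\mb{Q}$ as the orthogonal projector onto the row space of $\mb{Q}$, and then to show that the constraint \eqref{eq-ZF} pins this row space down to $\mathrm{span}\{\mb{h}_2,\ldots,\mb{h}_M\}^{\perp}$ no matter which admissible $\mb{Q}$ is chosen; since the orthogonal projector onto a fixed subspace is unique, the SNR expression cannot depend on $\mb{Q}$.

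First I would record the setup. Multiplying $\mb{y}=\mb{h}_1 s_1+\sum_{i=2}^{M}\mb{h}_i s_i+\mb{w}$ by $\mb{Q}$ kills the interference by \eqref{eq-ZF}, leaving $\mb{Q}\mb{y}=\mb{Q}\mb{h}_1 s_1+\mb{Q}\mb{w}$, so the equivalent channel vector is $\mb{Q}\mb{h}_1$ and the equivalent noise covariance is $\mb{Q}\mb{Q}^*$. Substituting into the definition of the instantaneous normalized receive SNR, $\gamma=(\mb{Q}\mb{h}_1)^*(\mb{Q}\mb{Q}^*)^{-1}(\mb{Q}\mb{h}_1)=\mb{h}_1^*\mb{Q}^*(\mb{Q}\mb{Q}^*)^{-1}\mb{Q}\mb{h}_1$, which is the first displayed identity; the content is the second.

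Next I would set $\mb{P}_{\mb{Q}}:=\mb{Q}^*(\mb{Q}\mb{Q}^*)^{-1}\mb{Q}$, which is well defined because $\mb{Q}$ has full row rank $N-M+1$, so $\mb{Q}\mb{Q}^*$ is invertible, and verify the defining properties of an orthogonal projector: $\mb{P}_{\mb{Q}}^*=\mb{P}_{\mb{Q}}$, $\mb{P}_{\mb{Q}}^2=\mb{P}_{\mb{Q}}$, $\mb{P}_{\mb{Q}}\mb{Q}^*=\mb{Q}^*$, and $\mb{P}_{\mb{Q}}\mb{v}=\mb{0}$ whenever $\mb{Q}\mb{v}=\mb{0}$. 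Together these say that $\mb{P}_{\mb{Q}}$ is the orthogonal projection onto $\mathrm{range}(\mb{Q}^*)=\mathrm{null}(\mb{Q})^{\perp}$. Then comes the only substantive step, a dimension count: since $\mb{Q}$ has $N-M+1$ rows and is full rank, $\dim\mathrm{null}(\mb{Q})=M-1$; by \eqref{eq-ZF} the $M-1$ vectors $\mb{h}_2,\ldots,\mb{h}_M$ lie in $\mathrm{null}(\mb{Q})$, and they are linearly independent by hypothesis, hence form a basis, so $\mathrm{null}(\mb{Q})=\mathrm{span}\{\mb{h}_2,\ldots,\mb{h}_M\}$. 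Taking orthogonal complements, $\mathrm{range}(\mb{Q}^*)=\mathrm{span}\{\mb{h}_2,\ldots,\mb{h}_M\}^{\perp}$, which is precisely the subspace onto which $\mb{\Sigma}$ projects (this is the sense in which the statement uses the phrase ``null space of $[\mb{h}_2,\ldots,\mb{h}_M]$''). This subspace is manifestly independent of $\mb{Q}$, so $\mb{P}_{\mb{Q}}=\mb{\Sigma}$ for every admissible $\mb{Q}$, and therefore $\gamma=\mb{h}_1^*\mb{\Sigma}\mb{h}_1$ regardless of the design of $\mb{Q}$.

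I do not expect a genuine obstacle: the argument is elementary linear algebra. The only points needing care are the Hermitian-transpose bookkeeping over $\mathds{C}$ and being explicit that ``null space'' in the statement means the orthogonal complement of the column span, so that equivalently one may take $\mb{\Sigma}=\mb{I}_N-\mb{H}_{-}(\mb{H}_{-}^*\mb{H}_{-})^{-1}\mb{H}_{-}^*$ with $\mb{H}_{-}:=[\mb{h}_2,\ldots,\mb{h}_M]$, and simply check $\mb{\Sigma}\mb{h}_i=\mb{0}$ for $i\ge 2$, $\mb{\Sigma}=\mb{\Sigma}^*=\mb{\Sigma}^2$, and $\mathrm{rank}\,\mb{\Sigma}=N-M+1$ — properties that determine $\mb{\Sigma}$ uniquely and are shared by $\mb{P}_{\mb{Q}}$.
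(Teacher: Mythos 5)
Your proposal is correct and follows essentially the same route as the paper: both arguments reduce to showing that $\mb{Q}^*\left(\mb{Q}\mb{Q}^*\right)^{-1}\mb{Q}$ is the orthogonal projector onto $\mathrm{span}\{\mb{h}_2,\ldots,\mb{h}_M\}^{\perp}$, a subspace fixed by the hypotheses and hence independent of the admissible $\mb{Q}$. The only difference is cosmetic — the paper reaches the projector via an explicit SVD of $\mb{Q}$, whereas you verify the projector axioms directly and supply the dimension count $\dim\mathrm{null}(\mb{Q})=M-1$ explicitly, which is a slightly cleaner way to justify the step the paper states as ``the rows of $\tilde{\mb{V}}$ form an orthonormal basis for the considered null space.''
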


\begin{proof}
Let the SVD of $\mb{Q}$ be $\mb{Q}=\mb{U}\mb{\Lambda}\mb{V}$ where
$\mb{U}\in \mathds{C}^{(N-M+1)\times (N-M+1)}, \mb{V}\in
\mathds{C}^{N\times N}$ denote the singular vector matrix and
$\mb{\Lambda}\in \mathds{C}^{(N-M+1)\times N}$ denotes the singular
value matrix. Further denote
$\mb{\Lambda}=\left[\tilde{\mb{\Lambda}}\ \mb{0}\right]$ where
$\tilde{\mb{\Lambda}}\in \mathds{R}^{(N-M+1)\times (N-M+1)}$ denotes
the diagonal square matrix with all singular values. It follows
\begin{align*}
\mb{Q}^*\left(\mb{Q}\mb{Q}^*\right)^{-1}\mb{Q}&=
(\mb{U}\mb{\Lambda}\mb{V})^*\left((\mb{U}\mb{\Lambda}\mb{V})(\mb{U}\mb{\Lambda}\mb{V})^*\right)^{-1}(\mb{U}\mb{\Lambda}\mb{V})\\
&=\mb{V}^*\mb{\Lambda}^*\mb{U}^*\left(\mb{U}\mb{\Lambda}\mb{V}\mb{V}^*\mb{\Lambda}^*\mb{U}^*\right)^{-1}\mb{U}\mb{\Lambda}\mb{V}\\
&=\mb{V}^*\mb{\Lambda}^*\mb{U}^*\left(\mb{U}\mb{\Lambda}\mb{\Lambda}^*\mb{U}^*\right)^{-1}\mb{U}\mb{\Lambda}\mb{V}\\
&=\mb{V}^*\mb{\Lambda}^*\mb{U}^*\mb{U}\left(\mb{\Lambda}\mb{\Lambda}^*\right)^{-1}\mb{U}^*\mb{U}\mb{\Lambda}\mb{V}\\
&=\mb{V}^*\mb{\Lambda}^*\left(\mb{\Lambda}\mb{\Lambda}^*\right)^{-1}\mb{\Lambda}\mb{V}=\tilde{\mb{V}}^*\tilde{\mb{\Lambda}}^*\left(\tilde{\mb{\Lambda}}\tilde{\mb{\Lambda}}^*\right)^{-1}\tilde{\mb{\Lambda}}\tilde{\mb{V}}=\tilde{\mb{V}}^*\tilde{\mb{V}},
\end{align*}
where $\tilde{\mb{V}}$ denotes the first $N-M+1$ rows of $\mb{V}$.
It suffices to verify that $\tilde{\mb{V}}^*\tilde{\mb{V}}$ is the
projection matrix to the null space of the subspace spanned by
$\left[\mb{h}_2,\ldots,\mb{h}_M\right]$. For any vector
$\hat{\mb{h}}\in \mathds{C}^{N\times 1}$ located in the subspace of
$\left[\mb{h}_2,\ldots,\mb{h}_M\right]$, we can assume it to be
$\hat{\mb{h}}=\underset{i=2:M}{\sum}\mb{h}_ic_i$, where $c_i\in
\mathds{C}$ is an arbitrary coefficient. From the ZF constraint in
\eqref{eq-ZF}, we have
$\mb{Q}\mb{h}_i=\mb{U}\tilde{\mb{\Lambda}}\tilde{\mb{V}}\mb{h}_i=\mb{0}$.
Since $\mb{U}$ and $\tilde{\mb{\Lambda}}$ are invertible, it follows
that $\tilde{\mb{V}}\mb{h}_i=\mb{0}$. Thus, we have
\begin{align*}
\tilde{\mb{V}}^*\tilde{\mb{V}}\hat{\mb{h}}=\tilde{\mb{V}}^*\tilde{\mb{V}}\underset{i=2:M}{\sum}\mb{h}_ic_i=\underset{i=2:M}{\sum}\tilde{\mb{V}}^*\left(\tilde{\mb{V}}\right)\mb{h}_ic_i=\mb{0}.
\end{align*}
Note that the rows of $\tilde{\mb{V}}$ also form an orthonormal
basis for the considered null space. Therefore,
$\tilde{\mb{V}}^*\tilde{\mb{V}}$ is a projection matrix to the null
space of the subspace spanned by
$\left[\mb{h}_2,\ldots,\mb{h}_M\right]$.
\end{proof}
Lemma \ref{lemma2} says all ZF receivers are essentially the same in
terms of the output SNR. Therefore, to obtain general results for
any ZF receivers, we can rely on a special ZF receiver that
simplifies the analysis.

\section{Proof of Theorem \ref{thm-prediv}}\label{appenpre}
The proof is based on the outage probability of the instantaneous
normalized receive SNR $\gamma$ that has been defined in
\eqref{eq-div}. Since the network is statistically symmetric to each
symbol, without loss of generality, we only study the expression of
$\gamma$ for $s^{[11]}_k$. First, we derive $\gamma$ for
$s^{[11]}_k$. For $M=2$, let the eigenvalues and eigenvectors of
$\left(\mb{H}^{[11]}\right)^{-1}{\mb{H}^{[21]}}\left({\mb{H}^{[22]}}\right)^{-1}{\mb{H}^{[12]}}$
be $\lambda_1,\lambda_2$ and $\mb{u}_1,\mb{u}_2$, respectively. The
designs in \eqref{eq-bfstep2} can be expanded as
{\setlength{\arraycolsep}{2pt}\begin{align}\label{eq-bf}
\ol{\mb{v}}^{[11]}=\left[\begin{array}{cc}\mb{u}_1&\mb{u}_2\\\mb{u}_2&\mb{u}_1\\\mb{0}_2&\mb{0}_2
\end{array}\right],\ \ol{\mb{v}}^{[12]}=\left[\begin{array}{cc}\mb{u}_1&\mb{u}_2\\\mb{0}_2&\mb{0}_2\\\mb{u}_2&\mb{u}_1
\end{array}\right],
\end{align}}
where $\mb{0}_2$ denotes a $2\times 1$ zero vector. The eigenvalues
of
$\left(\ol{\mb{H}}^{[11]}\right)^{-1}{\ol{\mb{H}}^{[21]}}\left({\ol{\mb{H}}^{[22]}}\right)^{-1}{\ol{\mb{H}}^{[12]}}$
are arranged as
$\diag\left(\lambda_1,\lambda_2,\lambda_2,\lambda_2,\lambda_1,\lambda_1\right)$.
Inserting the designs of transmit beamformers in \eqref{eq-bfstep1}
into \eqref{eq-columneq} gives the received signals at Receiver $1$
\begin{align}\nonumber
\mb{y}^{[1]}&=\ol{\mb{H}}^{[11]}\ol{\mb{v}}^{[11]}\left[\begin{array}{c}s^{[11]}_1\\s^{[11]}_2\end{array}\right]+\ol{\mb{H}}^{[21]}\ol{\mb{v}}^{[21]}\left[\begin{array}{c}s^{[21]}_1\\s^{[21]}_2\end{array}\right]+\ol{\mb{H}}^{[11]}\ol{\mb{v}}^{[12]}\left[\begin{array}{c}s^{[12]}_1\\s^{[12]}_2\end{array}\right]+\ol{\mb{H}}^{[21]}\ol{\mb{v}}^{[22]}\left[\begin{array}{c}s^{[22]}_1\\s^{[22]}_2\end{array}\right]+\mb{w}^{[1]}\\
\nonumber
&=\ol{\mb{H}}^{[11]}\ol{\mb{v}}^{[11]}\left[\begin{array}{c}s^{[11]}_1\\s^{[11]}_2\end{array}\right]+\alpha^{[21]}\ol{\mc{H}}^{[21]}\ol{\mb{v}}^{[11]}\left[\begin{array}{c}s^{[21]}_1\\s^{[21]}_2\end{array}\right]+\ol{\mb{H}}^{[11]}\ol{\mb{v}}^{[12]}\left[\begin{array}{c}s^{[12]}_1\\s^{[12]}_2\end{array}\right]+\alpha^{[22]}\ol{\mb{H}}^{[11]}\ol{\mb{v}}^{[12]}\left[\begin{array}{c}s^{[22]}_1\\s^{[22]}_2\end{array}\right]+\mb{w}^{[1]}\\
\label{eq-sys}
&=\ol{\mb{H}}^{[11]}\ol{\mb{v}}^{[11]}\left[\begin{array}{c}s^{[11]}_1\\s^{[11]}_2\end{array}\right]+\alpha^{[21]}\ol{\mc{H}}^{[21]}\ol{\mb{v}}^{[11]}\left[\begin{array}{c}s^{[21]}_1\\s^{[21]}_2\end{array}\right]+\ol{\mb{H}}^{[11]}\ol{\mb{v}}^{[12]}\underset{\left[I_1\
I_2\right]^\t}{\underbrace{\left[\begin{array}{c}s^{[12]}_1+\alpha^{[22]}s^{[22]}_1\\s^{[12]}_2+\alpha^{[22]}s^{[22]}_2\end{array}\right]}}+\mb{w}^{[1]},
\end{align}
where
$\ol{\mc{H}}^{[21]}=\ol{\mb{H}}^{[21]}\left(\ol{\mb{H}}^{[22]}\right)^{-1}\ol{\mb{H}}^{[12]}$;
$I_1, I_2$ denote the aligned interference; and $\alpha^{[21]},
\alpha^{[22]}$ are coefficients to normalize the power of transmit
beamformers. Note that
$\left(\mb{H}^{[11]}\right)^{-1}\mb{H}^{[21]}\left(\mb{H}^{[22]}\right)^{-1}\mb{H}^{[12]}\mb{u}_i=\lambda_i\mb{u}_i$
due to the definition of eigenvalue decomposition. Let
${\mc{H}}^{[21]}={\mb{H}}^{[21]}\left({\mb{H}}^{[22]}\right)^{-1}{\mb{H}}^{[12]}$.
It follows ${\mc{H}}^{[21]}\mb{u}_i=\lambda_i\mb{H}^{[11]}\mb{u}_i$.

Replacing the designs in \eqref{eq-bf} into \eqref{eq-sys} gives
{\setlength{\arraycolsep}{1pt}\begin{align}\nonumber &\mb{y}^{[1]}=
\ol{\mb{H}}^{[11]}
\left[\begin{array}{cc}\mb{u}_1&\mb{u}_2\\\mb{u}_2&\mb{u}_1\\\mb{0}_2&\mb{0}_2
\end{array}\right]\left[\begin{array}{c}s^{[11]}_1\\s^{[11]}_2\end{array}\right]+\alpha^{[21]}(\mb{I}_3\otimes \mc{H}^{[21]})\left[\begin{array}{cc}\mb{u}_1&\mb{u}_2\\\mb{u}_2&\mb{u}_1\\\mb{0}_2&\mb{0}_2
\end{array}\right]\left[\begin{array}{c}s^{[21]}_1\\s^{[21]}_2\end{array}\right]+\ol{\mb{H}}^{[11]}\left[\begin{array}{cc}\mb{u}_1&\mb{u}_2\\\mb{0}_2&\mb{0}_2\\\mb{u}_2&\mb{u}_1
\end{array}\right]\left[\begin{array}{c}I_1\\I_2\end{array}\right]+\mb{w}^{[1]}\\
&=\ol{\mb{H}}^{[11]}\left[\begin{array}{cc}\mb{u}_1&\mb{u}_2\\\mb{u}_2&\mb{u}_1\\\mb{0}_2&\mb{0}_2
\end{array}\right]\left[\begin{array}{c}s^{[11]}_1\\s^{[11]}_2\end{array}\right]+\alpha^{[21]}\ol{\mb{H}}^{[11]}\left[\begin{array}{cc}\lambda_1\mb{u}_1&\lambda_2\mb{u}_2\\\lambda_2\mb{u}_2&\lambda_1\mb{u}_1\\\mb{0}_2&\mb{0}_2
\end{array}\right]\left[\begin{array}{c}s^{[21]}_1\\s^{[21]}_2\end{array}\right]+\ol{\mb{H}}^{[11]}\left[\begin{array}{cc}\mb{u}_1&\mb{u}_2\\\mb{0}_2&\mb{0}_2\\\mb{u}_2&\mb{u}_1
\end{array}\right]\left[\begin{array}{c}I_1\\I_2\end{array}\right]+\mb{w}^{[1]}.\label{eq-sys1}
\end{align}}
\hspace{-3pt}To decouple $s^{[11]}_1$, the receiver projects
$\mb{y}^{[1]}$ into the null of the subspaces spanned by the
equivalent channel vectors of
$s^{[11]}_2,s^{[21]}_1,s^{[21]}_2,I_1$, and $I_2$. The resulting
instantaneous normalized receive SNR $\gamma$ is upperbounded by
that of the scenario when projecting only the null of the subspace
spanned by $s^{[11]}_2,s^{[21]}_1$, and $s^{[21]}_2$. This
\emph{upperbound system} corresponds to the system equation without
aligned interference {\setlength{\arraycolsep}{2pt}\begin{align}
&\left[\begin{array}{c}\tilde{\mb{y}}_1^{[1]}\\\tilde{\mb{y}}_2^{[1]}\end{array}\right]=\left[\begin{array}{cc}{\mb{H}}^{[11]}&\\&\mb{H}^{[11]}\end{array}\right]\left[\begin{array}{cc}\mb{u}_1&\mb{u}_2\\\mb{u}_2&\mb{u}_1\end{array}\right]\left[\begin{array}{c}s^{[11]}_1\\s^{[11]}_2\end{array}\right]
+\alpha^{[21]}\left[\begin{array}{cc}\mb{H}^{[11]}&\\&\mb{H}^{[11]}\end{array}\right]\left[\begin{array}{cc}\lambda_1\mb{u}_1&\lambda_2\mb{u}_2\\\lambda_2\mb{u}_2&\lambda_1\mb{u}_1\end{array}\right]\left[\begin{array}{c}s^{[21]}_1\\s^{[21]}_2\end{array}\right]+\left[\begin{array}{c}\tilde{\mb{w}}^{[1]}_1\\\tilde{\mb{w}}^{[1]}_2\end{array}\right],\label{eq-sys2}
\end{align}}
\hspace{-3pt}where
$\left[\tilde{\mb{y}}^{[1]\t}_1,\tilde{\mb{y}}^{[1]\t}_2\right]^{\t}$
corresponds to the first four entries in $\mb{y}^{[1]}$ with
$\tilde{\mb{y}}^{[1]}_1,\tilde{\mb{y}}^{[1]}_2 \in
\mathds{C}^{2\times 1}$, and similar notations apply to
$\tilde{\mb{w}}^{[1]}_1,\tilde{\mb{w}}^{[1]}_2 \in
\mathds{C}^{2\times 1}$. To simplify the analysis, from Lemma
\ref{lemma2}, we can use a specific ZF receiver that does not lose
generality. We first invert the channel matrix $\mb{H}^{[11]}$ and
switch the positions of $s_2^{[11]}$ and $s_2^{[21]}$ as
\begin{align}\nonumber
\left[\begin{array}{c}\left({\mb{H}}^{[11]}\right)^{-1}\tilde{\mb{y}}_1^{[1]}\\\left({\mb{H}}^{[11]}\right)^{-1}\tilde{\mb{y}}_2^{[1]}\end{array}\right]&=\left[\begin{array}{cc}\mb{u}_1&\mb{u}_2\\\mb{u}_2&\kappa\mb{u}_1\end{array}\right]\left[\begin{array}{c}s^{[11]}_1\\\alpha^{[21]}\lambda_2
s^{[21]}_2\end{array}\right]\\
&+\left[\begin{array}{cc}\kappa\mb{u}_1&\mb{u}_2\\\mb{u}_2&\mb{u}_1\end{array}\right]\left[\begin{array}{c}\alpha^{[21]}\lambda_2s^{[21]}_1\\s^{[11]}_2\end{array}\right]+\left[\begin{array}{c}\left({\mb{H}}^{[11]}\right)^{-1}\tilde{\mb{w}}^{[1]}_1\\
\left({\mb{H}}^{[11]}\right)^{-1}\tilde{\mb{w}}^{[1]}_2\end{array}\right],\label{eq-kappa}
\end{align}
where $\kappa$ denotes the ratio of the eigenvalues of
$\left(\mb{H}^{[11]}\right)^{-1}{\mb{H}^{[21]}}\left({\mb{H}^{[22]}}\right)^{-1}{\mb{H}^{[12]}}$,
i.e., $\kappa=\frac{\lambda_1}{\lambda_2}$. Define
$\mb{u}=\left[\mb{u}_1\ \mb{u}_2\right]$ as the eigenvector matrix
of
$\left(\mb{H}^{[11]}\right)^{-1}{\mb{H}^{[21]}}\left({\mb{H}^{[22]}}\right)^{-1}{\mb{H}^{[12]}}$
and
\begin{align}
\mb{P}=\left[\begin{array}{cc}0&1\\1&0\end{array}\right],
\mb{Q}=\left[\begin{array}{cc}\kappa&0\\0&1\end{array}\right].
\end{align}
To cancel $s^{[21]}_1$ and $s^{[11]}_2$, the receiver calculates
$\tilde{\mb{y}}\in \mathds{C}^{2\times 1}$ as
\begin{align}\nonumber
&\tilde{\mb{y}}=(\mb{H}^{[11]})^{-1}\tilde{\mb{y}}^{[1]}_1-\mb{u}\mb{Q}(\mb{H}^{[11]}\mb{u}\mb{P})^{-1}\tilde{\mb{y}}^{[1]}_2=\left(\mb{u}-\mb{u}\mb{Q}\left(\mb{u}\mb{P}\right)^{-1}\mb{u}\mb{Q}\mb{P}\right)\left[\begin{array}{cc}s^{[11]}_1&\alpha^{[21]}\lambda_2
s^{[21]}_2\end{array}\right]^{\t}\\&+(\mb{H}^{[11]})^{-1}\tilde{\mb{w}}^{[1]}_1-\mb{u}(\mb{Q}\mb{P}^{-1})\mb{u}^{-1}\left(\mb{H}^{[11]}\right)^{-1}\tilde{\mb{w}}^{[1]}_2\nonumber\\
\label{eq-ZF3}&=(1-\kappa)\mb{u}\left[\begin{array}{cc}s^{[11]}_1&\alpha^{[21]}\lambda_2s^{[21]}_2\end{array}\right]^{\t}+(\mb{H}^{[11]})^{-1}\tilde{\mb{w}}^{[1]}_1-\mb{u}\mb{Q}\mb{P}\mb{u}^{-1}(\mb{H}^{[11]})^{-1}\tilde{\mb{w}}^{[1]}_2.
\end{align}
Note that the equivalent channel matrix is $\mb{u}$. To further
decouple $s^{[11]}_1$ from $s^{[21]}_2$ by ZF, the receiver
multiplies $\mb{u}^{-1}$ to the left side of $\tilde{\mb{y}}$ to
achieve
\begin{align}\label{eq-ZF4}
&\mb{u}^{-1}\tilde{\mb{y}}=(1-\kappa)\left[\begin{array}{cc}s^{[11]}_1&\alpha^{[21]}\lambda_2s^{[21]}_2\end{array}\right]^{\t}+\underset{\tilde{\mb{w}}}{\underbrace{\mb{u}^{-1}(\mb{H}^{[11]})^{-1}\tilde{\mb{w}}^{[1]}_1-\mb{Q}\mb{P}\mb{u}^{-1}(\mb{H}^{[11]})^{-1}\tilde{\mb{w}}^{[1]}_2}}.
\end{align}
Since the entries in $\tilde{\mb{w}}^{[1]}_1$ and
$\tilde{\mb{w}}^{[1]}_2$ are i.~i.~d.~$\mc{CN}(0,1)$ distributed,
the covariance matrix of the equivalent noise vector
$\tilde{\mb{w}}$ can be calculated as
\begin{align}
\mb{\Sigma}=\mb{u}^{-1}\left(\mb{H}^{[11]*}\mb{H}^{[11]}\right)^{-1}(\mb{u}^{-1})^*+\mb{Q}\mb{P}\mb{u}^{-1}\left(\mb{H}^{[11]*}\mb{H}^{[11]}\right)^{-1}(\mb{u}^{-1})^*\mb{P}\mb{Q}.
\end{align}
To decode $s^{[11]}_1$, the receiver uses the $(1,1)$th entry of
$\mb{\Sigma}$ as the variance for noise. Denote
\begin{align}\label{eq-delta}\mb{\Delta}=\left(\mb{u}^*\mb{H}^{[11]*}\mb{H}^{[11]}\mb{u}\right)^{-1}\end{align}
and its $(i,j)$th entry as $\delta_{ij}$. The noise variance in the
decoding of $s^{[11]}_1$ can be calculated as
$\delta_{11}+\kappa^2\delta_{22}$. The instantaneous normalized
receive SNR for this upperbound system can be expressed as
\begin{align}\label{eq-SNR2}\gamma'=\frac{(1-\kappa)^2}{\delta_{11}+\kappa^2\delta_{22}}.
\end{align}
Now, we focus on the outage probability of $\gamma$. Let $\epsilon$
be an arbitrary small positive number. The outage probability of the
upperbound system can be expanded as
\begin{align}\label{eq-cond}
P(\gamma<\epsilon)>P(\gamma'<\epsilon)>P(\gamma'<\epsilon|1\le|\kappa|\le
2)P(1\le|\kappa|\le 2).
\end{align}
Using the condition $1\le|\kappa|\le 2$, we can further upperbound
$\gamma'$ by
$\gamma'<\frac{(1+2)^2}{\delta_{11}+\delta_{22}}=\frac{9}{\tr
\mb{\Delta}}$. Thus, we have
\begin{align*}
P(\gamma<\epsilon)>P\left(\frac{1}{\tr
\mb{\Delta}}<\frac{\epsilon}{9}\right)P(1\le|\kappa|\le 2).
\end{align*}
Recall that $\kappa$ is the ratio of the eigenvalues of
$\left(\mb{H}^{[11]}\right)^{-1}{\mb{H}^{[21]}}\left({\mb{H}^{[22]}}\right)^{-1}{\mb{H}^{[12]}}$.
All channel matrices are independently generated from a continuous
distribution. Thus, $P(1\le|\kappa|\le 2)$ is a bounded nonzero
positive number. It suffices to rely on the scaling of the outage
probability of $\frac{1}{\tr \mb{\Delta}}$. From the definition of
$\mb{\Delta}$ in \eqref{eq-delta}, we have
\begin{align*}
\frac{1}{\tr \mb{\Delta}}=\frac{1}{\tr \left(
\left(\mb{u}^*\mb{H}^{[11]*}\mb{H}^{[11]}\mb{u}\right)^{-1}\right)}=
\frac{1}{\tr
\left((\mb{H}^{[11]*}\mb{H}^{[11]})^{-1}\left(\mb{u}\mb{u}^*\right)^{-1}\right)}<\frac{2}{\tr
\left((\mb{H}^{[11]*}\mb{H}^{[11]})^{-1}\right)}.
\end{align*}
The inequality in the last line is valid because
$\mb{u}\mb{u}^{*}\prec\tr(\mb{u}\mb{u}^{*})\mb{I}_2=2\mb{I}_2$,
where $\tr(\mb{u}\mb{u}^{*})\mb{I}_2-\mb{u}\mb{u}^{*}$ is a positive
definite matrix. Applying Lemma \ref{lemma1} to the term
$\frac{1}{\tr
\left((\mb{H}^{[11]})^{-1}(\mb{H}^{[11]*})^{-1}\right)}$ results in
a diversity gain of only 1. Thus, the achievable diversity for
$\frac{1}{\tr \mb{\Delta}}$ is not larger than 1. This concludes the
proof.

\section{Diversity analysis for the modified JaSh scheme}\label{appen2}
The modified JaSh scheme collects two alignment blocks and uses
Alamouti codes as the inner codes. The resulting instantaneous
normalized receive SNR is the sum of those of $s_1^{[11]}$ and
$s_2^{[11]}$ in the JaSh scheme. In this appendix, we present the
analysis for the modified JaSh scheme.
\begin{theorem}\label{thm-modi}
In the short-term regime, the achievable diversity gain of the
modified JaSh scheme is no more than 1 for the $2\times 2$
double-antenna X channel.
\end{theorem}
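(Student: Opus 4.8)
The strategy is to piggyback on the outage computation already carried out for the JaSh scheme in Appendix~\ref{appenpre}. As noted just above, stacking two JaSh alignment blocks and applying an Alamouti inner code makes the instantaneous normalized receive SNR of each transmitted symbol equal to $\gamma_{\mr{mod}}=\gamma_1+\gamma_2$, where $\gamma_k$ is the instantaneous normalized receive SNR of $s_k^{[11]}$ at Receiver~$1$ in the JaSh scheme. By the statistical symmetry of the network over its four communication directions it suffices to lower bound the outage probability $P(\gamma_{\mr{mod}}<\epsilon)$ and show that its decay exponent, in the sense of \eqref{eq-div}, is at most $1$.

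First I would bound $\gamma_1$ and $\gamma_2$ separately. Exactly as in Appendix~\ref{appenpre}, discarding the aligned interference can only increase the post-ZF SNR, so $\gamma_1\le\gamma_1'$ and $\gamma_2\le\gamma_2'$, where $\gamma_k'$ is the corresponding SNR in the interference-free ``upperbound system'' \eqref{eq-sys2}. Appendix~\ref{appenpre} already derives $\gamma_1'=\frac{(1-\kappa)^2}{\delta_{11}+\kappa^2\delta_{22}}$, with $\kappa$ the eigenvalue ratio of $(\mb{H}^{[11]})^{-1}\mb{H}^{[21]}(\mb{H}^{[22]})^{-1}\mb{H}^{[12]}$ and $\mb{\Delta}=(\mb{u}^*\mb{H}^{[11]*}\mb{H}^{[11]}\mb{u})^{-1}$ with entries $\delta_{ij}$. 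Running the same chain \eqref{eq-kappa}--\eqref{eq-SNR2} but decoupling $s_2^{[11]}$ instead of $s_1^{[11]}$ — the two symbols enter \eqref{eq-kappa} through mirror-image column patterns, with the roles of $\kappa$ and $1/\kappa$ interchanged between the two sub-blocks — yields a bound of the same form, and in fact $\gamma_2'=\gamma_1'$. Hence $\gamma_{\mr{mod}}\le\gamma_1'+\gamma_2'=2\gamma_1'$.

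It then remains to show that $2\gamma_1'$ has diversity at most $1$, which is immediate from Appendix~\ref{appenpre}. Conditioning on the event $\{1\le|\kappa|\le 2\}$, which has fixed positive probability, gives $2\gamma_1'\le\frac{18}{\tr\mb{\Delta}}$, so
\begin{align*}
P(\gamma_{\mr{mod}}<\epsilon)\ \ge\ P\!\left(\tfrac{1}{\tr\mb{\Delta}}<\tfrac{\epsilon}{18}\right)P(1\le|\kappa|\le 2);
\end{align*}
and since $\frac{1}{\tr\mb{\Delta}}<\frac{2}{\tr\left((\mb{H}^{[11]*}\mb{H}^{[11]})^{-1}\right)}$ (because $\mb{u}\mb{u}^*\prec\tr(\mb{u}\mb{u}^*)\mb{I}_2=2\mb{I}_2$), Lemma~\ref{lemma1} applied to $\mb{F}=\mb{H}^{[11]}$ shows this lower bound decays no faster than a constant times $\epsilon$. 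Therefore $d=-\lim_{\epsilon\to 0}\log P(\gamma_{\mr{mod}}<\epsilon)/\log\epsilon\le 1$, proving the theorem.

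The only genuinely new computation is the mirrored evaluation of $\gamma_2'$; everything else is a constant-factor rescaling of Appendix~\ref{appenpre}, and even a crude bound $\gamma_2'=O(1/\tr\mb{\Delta})$ under $\{1\le|\kappa|\le2\}$ would suffice. I expect the main subtlety — inherited directly from Appendix~\ref{appenpre} — to be the joint conditioning on $\{1\le|\kappa|\le 2\}$ and the near-singularity of $\mb{H}^{[11]}$: one must observe that, conditioned on $\mb{H}^{[11]}$, the ratio $\kappa$ still has a continuous distribution with $P(1\le|\kappa|\le 2\mid\mb{H}^{[11]})$ bounded away from zero, so the two events co-occur with probability of order $\epsilon$.
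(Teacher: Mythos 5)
Your overall route is the same as the paper's: pass to the interference-free ``upperbound system'' of the JaSh analysis, condition on $\{1\le|\kappa|\le 2\}$, and collapse everything to the outage of $1/\tr\mb{\Delta}$ via Lemma~\ref{lemma1}. However, the step ``\,in fact $\gamma_2'=\gamma_1'$, hence $\gamma_{\mr{mod}}\le 2\gamma_1'$\,'' is incorrect. Repeating the chain \eqref{eq-kappa}--\eqref{eq-SNR2} for $s_2^{[11]}$ amounts to swapping $\mb{u}_1$ and $\mb{u}_2$, which swaps $\delta_{11}$ and $\delta_{22}$ in the denominator rather than leaving the expression invariant: the paper obtains $\gamma_2^{[11]}<\frac{(1-\kappa)^2}{\kappa^2\delta_{11}+\delta_{22}}$, whereas $\gamma_1'=\frac{(1-\kappa)^2}{\delta_{11}+\kappa^2\delta_{22}}$. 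These coincide only when $|\kappa|=1$ or $\delta_{11}=\delta_{22}$; the paper even remarks immediately after the theorem that the two instantaneous SNRs are correlated but \emph{distinct}. So the inequality $\gamma_1'+\gamma_2'\le 2\gamma_1'$ has no justification, and your bound on $\gamma_{\mr{mod}}$ does not follow as written.

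The error happens to be benign for the final conclusion: on the event $\{1\le|\kappa|\le2\}$ both denominators dominate $\delta_{11}+\delta_{22}=\tr\mb{\Delta}$ and both numerators are at most $9$, so $\gamma_1'+\gamma_2'\le 18/\tr\mb{\Delta}$ --- which is exactly the bound the paper uses --- and from there your argument proceeds unchanged. You should therefore replace the claimed equality by the correct two-term bound; with that one-line repair the proof matches the paper's. Your closing observation about the joint conditioning on $\{1\le|\kappa|\le 2\}$ and the near-singularity of $\mb{H}^{[11]}$ identifies a real subtlety (the step $P(A\cap B)\ge P(A)P(B)$ needs an argument), which the paper itself passes over silently; flagging it is appropriate, though neither you nor the paper actually resolves it.
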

\begin{proof}
The analysis is similar to the proof of Theorem \ref{thm-prediv}.
From \eqref{eq-sys1}, the instantaneous normalized receive SNR of
$s_2^{[11]}$ can be obtained from that of $s_1^{[11]}$ by swapping
$\mb{u}_1$ and $\mb{u}_2$. Similar to the specific ZF receiver in
\eqref{eq-ZF3} and \eqref{eq-ZF4}, we can obtain an upperbound on
the instantaneous normalized receive SNR of $s_2^{[11]}$ from
\eqref{eq-SNR2} as
\begin{align*}
\gamma_2^{[11]}<\frac{(1-\kappa)^2}{\kappa^2\delta_{11}+\delta_{22}},
\end{align*}
where $\kappa$ and $\delta_{ij}$ are defined in \eqref{eq-kappa} and
\eqref{eq-delta}, respectively. Since the use of Alamouti codes
accumulates the SNRs of $s_1^{[11]}$ and $s_2^{[11]}$, we have
\begin{align*}
\gamma_1^{[11]}+\gamma_2^{[11]}<\frac{(1-\kappa)^2}{\delta_{11}+\kappa^2\delta_{22}}+\frac{(1-\kappa)^2}{\kappa^2\delta_{11}+\delta_{22}}.
\end{align*}
The conditional bounding technique in \eqref{eq-cond} can be
straightforwardly applied as
\begin{align*}
&P\left(\gamma_1^{[11]}+\gamma_2^{[11]}<\epsilon\right)>P\left(\frac{(1-\kappa)^2}{\delta_{11}+\kappa^2\delta_{22}}+\frac{(1-\kappa)^2}{\kappa^2\delta_{11}+\delta_{22}}<\epsilon\mid1\le|\kappa|\le
2\right)P(1\le|\kappa|\le
2)\\
&>P\left(\frac{1}{\delta_{11}+\delta_{22}}<\frac{\epsilon}{18}\right)P(1\le|\kappa|\le
2).
\end{align*}
The rest of the proof is similar to that of Theorem \ref{thm-prediv}
by showing that the scaling of the outage probability of
$P\left(\frac{1}{\delta_{11}+\delta_{22}}<\epsilon\right)$ has only
diversity 1. This concludes the proof.
\end{proof}
The results of Theorem \ref{thm-modi} are surprising. Although the
instantaneous normalized receive SNRs $\gamma_1^{[11]}$ and
$\gamma_2^{[11]}$ are correlated, they are still distinct. Theorem
\ref{thm-modi} implies that the sum of two distinct SNRs is not
sufficient to achieve a diversity of 2.

\section{Proof of Theorem \ref{thm-div}}\label{appen1}
The proof is based on the outage probability of the instantaneous
normalized receive SNR of $s_1^{[11]}$. Since the design is
symmetric for all symbols, similar diversity results apply to the
decoding of other symbols. Let
$\hat{\mb{H}}^{[21]}=\left[\frac{\hat{\mb{H}}_1^{[21]*}}{\left\|\hat{\mb{H}}^{[21]}_1\right\|^2}\
-\frac{\hat{\mb{H}}_2^{[21]*}}{\left\|\hat{\mb{H}}_2^{[21]}\right\|^2}\right]^*$,
$\hat{\mb{H}}^{[11]}=\left[\hat{\mb{H}}_1^{[11]*}\
\hat{\mb{H}}_2^{[11]*}\right]^*$. The equivalent system in
\eqref{eq-remainedeq} can be rewritten as
\begin{align}\hat{\mb{y}}=\sqrt{\frac{3}{4}}\hat{\mb{H}}^{[21]*}\hat{\mb{H}}^{[11]}\left[\begin{array}{c}s^{[11]}_1\\s^{[11]}_2\end{array}\right]+\hat{\mb{H}}^{[21]*}\left[\begin{array}{c}\hat{\mb{w}}_1\\
\hat{\mb{w}}_2\end{array}\right].\end{align} The covariance matrix
of the equivalent noise vector is
$\hat{\mb{H}}^{[21]*}\mb{\Sigma}_{\hat{\mb{w}}}\hat{\mb{H}}^{[21]}$,
where $\mb{\Sigma}_{\hat{\mb{w}}}=\diag\left(1,2,1,2\right).$ Let
the first column of $\hat{\mb{H}}^{[11]}$ be
$\hat{\mb{h}}_1^{[11]}$, where
$\hat{\mb{h}}_1^{[11]}=\left[\tilde{h}_{11}^{[11]}\
\tilde{h}_{21}^{[11]*}\ \tilde{h}_{12}^{[11]}\
\tilde{h}_{22}^{[11]*}\right]^{\t}$. The instantaneous normalized
receive SNR of $s_1^{[11]}$ can be expressed as
\begin{align}\nonumber
\gamma=\frac{3}{4}\left(\hat{\mb{H}}^{[21]*}\hat{\mb{h}}^{[11]}_1\right)^*\left(\hat{\mb{H}}^{[21]*}\mb{\Sigma}_{\hat{\mb{w}}}\hat{\mb{H}}^{[21]}\right)^{-1}\hat{\mb{H}}^{[21]*}\hat{\mb{h}}^{[11]}_1.\end{align}
Define
$\bar{\gamma}=\hat{\mb{h}}_1^{[11]*}\hat{\mb{H}}^{[21]}\left(\hat{\mb{H}}^{[21]*}\hat{\mb{H}}^{[21]}\right)^{-1}\hat{\mb{H}}^{[21]*}\hat{\mb{h}}^{[11]}_1$.
It can be shown that $\frac{3}{4}\bar{\gamma}\ge \gamma \ge
\frac{3}{8} \bar{\gamma}$. By \eqref{eq-div}, $\gamma$ and
$\bar{\gamma}$ have the same diversity. Thus, we focus on analyzing
the outage probability of $\bar{\gamma}$ to get rid of
$\mb{\Sigma}_{\hat{\mb{w}}}$. Since the columns of
$\hat{\mb{H}}^{[21]}$ are orthogonal, $\bar{\gamma}$ can be further
simplified as
\begin{align}\label{eq-gamma}\bar{\gamma}=\underset{b^{[21]}}{\underbrace{\left(\frac{1}{2\left\|\hat{\mb{H}}^{[21]}_1\right\|^2}+\frac{1}{2\left\|\hat{\mb{H}}^{[21]}_2\right\|^2}\right)^{-1}}}\hat{\mb{h}}_1^{[11]*}\hat{\mb{H}}^{[21]}\hat{\mb{H}}^{[21]*}\hat{\mb{h}}^{[11]}_1.\end{align}
It is complicated to analyze the distribution of $\bar{\gamma}$
directly. Instead, we fix $\mb{H}^{[21]}$, $\mb{H}^{[12]}$, and
$\mb{H}^{[22]}$, and allow only $\mb{H}^{[11]}$ to change. Then,
$\hat{\mb{H}}^{[21]}$ is fixed, whereas $\hat{\mb{h}}_1^{[11]}$ is
still a random vector. Since $\tilde{\mb{H}}^{[11]}=\frac{1}{
\left\|(\mb{H}^{[12]})^{-1}\right\|}(\mb{H}^{[12]})^{-1}\mb{H}^{[11]}$,
it can be shown that the conditional distribution of $\bar{\gamma}$
is a generalized Chi-square distribution with degree 2. The
covariance matrix of the components in the generalized Chi-square
distribution can be calculated as
\begin{align}\label{eq-covm}\mb{\Phi}=\underset{\mb{H}^{[11]}|\mb{H}^{[ji]},
(j,i)\neq (1,1)}{\Exp}
b^{[21]}\hat{\mb{H}}^{[21]*}\hat{\mb{h}}^{[11]}_1\hat{\mb{h}}^{[11]*}_1\hat{\mb{H}}^{[21]}=b^{[21]}\hat{\mb{H}}^{[21]*}\left(\underset{\mb{H}^{[11]}|\mb{H}^{[ji]},
(j,i)\neq (1,1)}{\Exp}\left(\hat{\mb{h}}^{[11]}_1\hat{\mb{h}}_1^{[11]*}\right)\right)\hat{\mb{H}}^{[21]}.
\end{align}
The equality holds because $\hat{\mb{H}}^{[21]}$ only depends on
$\tilde{\mb{H}}^{[21]}$, which depends on $\mb{H}^{[21]}$ and
$\mb{H}^{[22]}$. Thus, $\hat{\mb{H}}^{[21]}$ is independent from
$\mb{H}^{[11]}$. It can be calculated that $\Exp
\left(\tilde{h}^{[11]}_{1i}\tilde{h}_{1i}^{[11]*}\right)=\frac{\left\|\left(\mb{H}^{[12]}\right)^{-1}_{1\cdot}\right\|^2}{\left\|\left(\mb{H}^{[12]}\right)^{-1}\right\|^2}$,
$\Exp
\left(\tilde{h}^{[11]}_{2i}\tilde{h}_{2i}^{[11]*}\right)=\frac{\left\|\left(\mb{H}^{[12]}\right)^{-1}_{2\cdot}\right\|^2}{\left\|\left(\mb{H}^{[12]}\right)^{-1}\right\|^2}$,
and $\Exp\left(\tilde{h}^{[11]}_{1i}\tilde{h}^{[11]}_{2i}\right)=0$
for $i=1,2$, where $\left(\mb{H}^{[12]}\right)^{-1}_{k\cdot}$
denotes the $k$th row of $\left(\mb{H}^{[12]}\right)^{-1}$. Let
$\mb{\Theta}_{\mb{H}^{[12]}}=\diag\left(\frac{\left\|\left(\mb{H}^{[12]}\right)^{-1}_{1\cdot}\right\|^2}{\left\|\left(\mb{H}^{[12]}\right)^{-1}\right\|^2},
\frac{\left\|\left(\mb{H}^{[12]}\right)^{-1}_{2\cdot}\right\|^2}{\left\|\left(\mb{H}^{[12]}\right)^{-1}\right\|^2}\right)$.
The covariance matrix can be simplified as
\begin{align}\label{eq-cov}
\mb{\Phi}=b^{[21]}\left(\frac{\hat{\mb{H}}^{[21]*}_1\mb{\Theta}_{\mb{H}^{[12]}}\hat{\mb{H}}^{[21]}_1}{\left\|\hat{\mb{H}}^{[21]}_1\right\|^4}+\frac{\hat{\mb{H}}_2^{[21]*}\mb{\Theta}_{\mb{H}^{[12]}}\hat{\mb{H}}_2^{[21]}}{\left\|\hat{\mb{H}}^{[21]}_2\right\|^4}\right).\end{align}

Given the covariance matrix, we calculate the outage probability of
$\bar{\gamma}$ conditioned on $\mb{H}^{[21]}$, $\mb{H}^{[12]}$, and
$\mb{H}^{[22]}$. Denote the eigenvalues of $\mb{\Phi}$ as
$\lambda_1$ and $\lambda_2$. Since the distribution of
$\bar{\gamma}$ is a generalized Chi-square with degree 2, the
probability density function (pdf) of $\bar{\gamma}$ is
$f_{\bar{\gamma}}=\frac{\exp\left(-\bar{\gamma}/\lambda_1\right)}{\lambda_1-\lambda_2}+\frac{\exp\left(-\bar{\gamma}/\lambda_2\right)}{\lambda_2-\lambda_1}$.
It follows that \begin{align*}
&P\left(\bar{\gamma}<\epsilon|\mb{H}^{[21]},\mb{H}^{[12]},\mb{H}^{[22]}\right)=\int_{0}^\epsilon\frac{\exp\left(-\frac{\bar{\gamma}}{\lambda_1}\right)}{\lambda_1-\lambda_2}+\frac{\exp\left(-\frac{\bar{\gamma}}{\lambda_2}\right)}{\lambda_2-\lambda_1}d\bar{\lambda}\\
&=\frac{\lambda_1}{\lambda_1-\lambda_2}\left(1-\exp\left(-\frac{\epsilon}{\lambda_1}\right)\right)-\frac{\lambda_2}{\lambda_1-\lambda_2}\left(1-\exp\left(-\frac{\epsilon}{\lambda_2}\right)\right)\\
&=1-\frac{\lambda_1}{\lambda_1-\lambda_2}\exp\left(-\frac{\epsilon}{\lambda_1}\right)+\frac{\lambda_2}{\lambda_1-\lambda_2}\exp\left(-\frac{\epsilon}{\lambda_2}\right)\\
&=1-\frac{\lambda_1}{\lambda_1-\lambda_2}\left(1-\frac{\epsilon}{\lambda_1}+\frac{\epsilon^2}{\lambda_1^2}\right)+\frac{\lambda_2}{\lambda_1-\lambda_2}\left(1-\frac{\epsilon}{\lambda_2}+\frac{\epsilon^2}{\lambda_2^2}\right)+o\left(\epsilon^2\right)=\frac{\epsilon^2}{\lambda_1\lambda_2}+o\left(\epsilon^2\right).
\end{align*}

Using \eqref{eq-div}, the diversity gain of $\bar{\gamma}$ can be
calculated as \begin{align*}& d=\lim_{\epsilon\rightarrow
0}\frac{\log P(\gamma<\epsilon)}{\log
\epsilon}=\lim_{\epsilon\rightarrow 0}\frac{\log
\underset{\mb{H}^{[ji]},(j,i)\neq(1,1)}{\Exp}P\left(\gamma<\epsilon|\mb{H}^{[ji]},(j,i)\neq(1,1)\right)}{\log
\epsilon}\\
&=\lim_{\epsilon\rightarrow 0}\frac{\log
\epsilon^2\left(\underset{\mb{H}^{[ji]}}{\Exp}\frac{1}{\lambda_1\lambda_2}\right)+o(\epsilon^2)}{\log
\epsilon}.\end{align*} Obviously, the achievable diversity gain is 2
if and only if
$\underset{\mb{H}^{[ji]}}{\Exp}\frac{1}{\lambda_1\lambda_2}$ is
bounded by a limited number. Next, we show that
$\underset{\mb{H}^{[ji]}}{\Exp}\frac{1}{\lambda_1\lambda_2}$ is
upperbounded by a limited number, followed by being lowerbounded by
another number.


Theorem VI.7.1 in \cite{matrixanalysis} introduces a lowerbound on
the determinant of the sum of two Hermitian matrices. Since
$\mb{\Theta}_{\mb{H}^{[12]}}$ is a diagonal matrix, we have $\det
\mb{\Phi}> \det \mb{\Theta}_{\mb{H}^{[12]}}$. It follows,
\begin{align}\label{eq-bound1}
\underset{\mb{H}^{[ji]}}{\Exp}\frac{1}{\lambda_1\lambda_2}=\underset{\mb{H}^{[ji]}}{\Exp}\frac{1}{\det\mb{\Phi}}<\underset{\mb{H}^{[12]}}{\Exp}\frac{1}{\det\mb{\Theta}_{\mb{H}^{[12]}}}<\underset{\mb{H}^{[12]}}{\Exp}\frac{1}{\det\mb{M}_{\mb{H}^{[12]}}},
\end{align}
where
$\mb{M}_{\mb{H}^{[12]}}=\frac{(\mb{H}^{[12]})^{-1}(\mb{H}^{[12]*})^{-1}}{\left\|\left(\mb{H}^{[12]}\right)^{-1}\right\|^2}$.
The last inequality is valid because of the Hadamard inequality
since the diagonal entries of $\mb{\Theta}_{\mb{H}^{[12]}}$ and
$\mb{M}_{\mb{H}^{[12]}}$ are the same. Let the eigenvalues of
$\mb{H}^{[12]}\mb{H}^{[12]*}$ be $x_1$ and $x_2$, whose joint pdf
can be expressed as
$\frac{1}{2\pi}\exp\left(-\frac{x_1^2+x_2^2}{2}\right)(x_1-x_2)^2$.
The RHS of \eqref{eq-bound1} can be calculated as \begin{align*}
&\underset{\mb{H}^{[12]}}{\Exp}\frac{1}{\det\mb{M}_{\mb{H}^{[12]}}}=\underset{x_1,x_2}{\Exp}
\frac{\left(x_1^{-1}+x_2^{-1}\right)^2}{x_1^{-1}x_2^{-1}}=\underset{x_1,x_2}{\Exp}
\left(2+\frac{x_2}{x_1}+\frac{x_1}{x_2}\right)=2+2\underset{x_1,x_2}{\Exp}\frac{x_2}{x_1}\\
&\underset{x_1,x_2}{\Exp}\frac{x_2}{x_1}=\frac{1}{2\pi}\int
\frac{x_2}{x_1}\exp
\left(-\frac{x_1^2+x_2^2}{2}\right)(x_1-x_2)^2dx_1dx_2<\frac{1}{2\pi}\int
x_1x_2\exp
\left(-\frac{x_1^2+x_2^2}{2}\right)dx_1dx_2=\frac{2}{\pi}.
\end{align*}
The last inequality holds because $(x_1-x_2)^2<x_1^2$. Thus, we have
shown that
$\underset{\mb{H}^{[ji]}}{\Exp}\frac{1}{\lambda_1\lambda_2}$ is
upperbounded by $2+\frac{4}{\pi}$. Finally, we show the lowerbound.
Since the sum of the diagonal entries in
$\mb{\Theta}_{\mb{H}^{[12]}}$ is equal to 1, i.e.,
$\frac{\left\|\left(\mb{H}^{[12]}\right)^{-1}_{1\cdot}\right\|^2}{\left\|\left(\mb{H}^{[12]}\right)^{-1}\right\|^2}+\frac{\left\|\left(\mb{H}^{[12]}\right)^{-1}_{2\cdot}\right\|^2}{\left\|\left(\mb{H}^{[12]}\right)^{-1}\right\|^2}=1$,
we have $\mb{\Theta}_{\mb{H}^{[12]}}\prec \mb{I}_2$. Then, from
\eqref{eq-cov},$\mb{\Phi}\prec
b^{[21]}\left(\frac{\hat{\mb{H}}_1^{[21]*}\hat{\mb{H}}^{[21]}_1}{\left\|\hat{\mb{H}}^{[21]}_1\right\|^4}+\frac{\hat{\mb{H}}_2^{[21]*}\hat{\mb{H}}^{[21]}_2}{\left\|\hat{\mb{H}}^{[21]}_2\right\|^4}\right)=
\mb{I}_2$. It follows,
\begin{align*}\underset{\mb{H}^{[ji]}}{\Exp}\frac{1}{\lambda_1\lambda_2}=\underset{\mb{H}^{[ji]}}{\Exp}\frac{1}{\det\mb{\Phi}}>\frac{1}{\det
\mb{I}_2}=1. \end{align*} Therefore,
$\underset{\mb{H}^{[ji]}}{\Exp}\frac{1}{\lambda_1\lambda_2}$ is
lowerbounded by $1$.

\newpage


\begin{figure}\centering
  \includegraphics[width=5in]{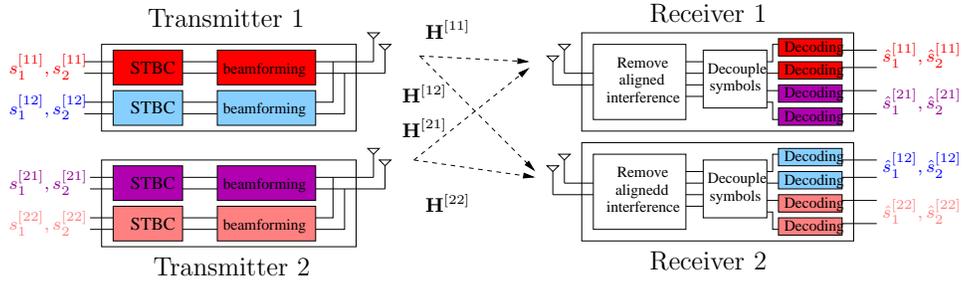}\\
  \caption{IA designs using Alamouti codes for the $2\times 2$ double-antenna X channel. }\label{fig-diag}
\end{figure}

\begin{figure}\centering
  \includegraphics[width=4in]{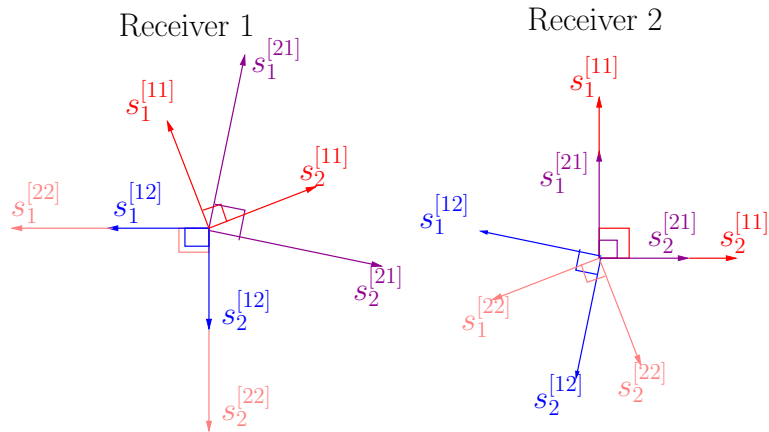}\\
  \caption{Receiver signal space using the Alamouti designs. Each arrow represent one equivalent channel vector. Each receiver observes a $6$-dimensional signal space. Two dimensions are for aligned interference, and the remained four dimensions are for desired symbols. The equivalent channel vector of $s_1^{[ji]}$ is orthogonal to that of $s_2^{[ji]}$.}\label{fig-space}
\end{figure}

\begin{figure}
\centering
  \includegraphics[width=5in]{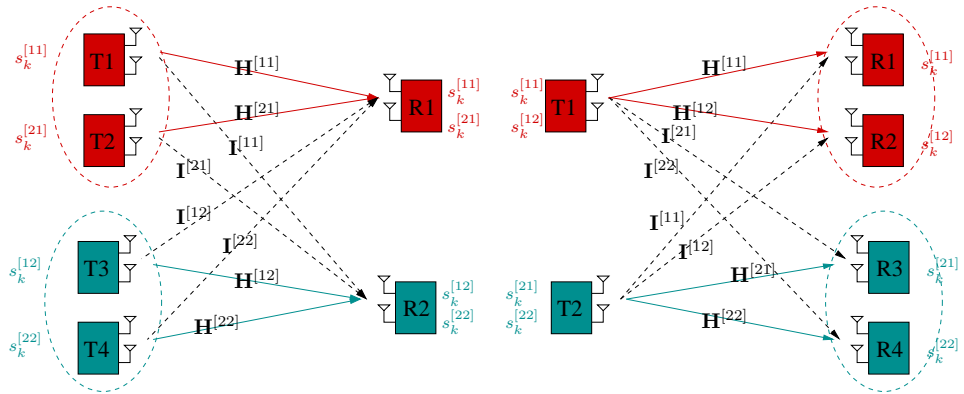}\\
  \caption{Network models of the two-cell IMAC and the IBC. In each cell, one BS is serving two users. Transmission in one cell creates interference to the other cell. The desired links are represented by solid lines, whereas interferring links are represented by dashed lines. }\label{fig-InterBC}
\end{figure}

\begin{figure}
\centering
  \includegraphics[width=4in]{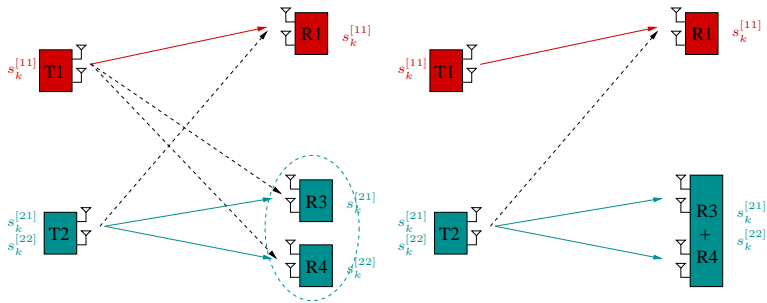}\\
  \caption{Proof of the outerbound on the DoF region of the IBC: a modified IBC (left side) and a Z channel (right side).}\label{fig-IBCOuterbound}
\end{figure}

\begin{figure}
\centering
  \includegraphics[width=4in]{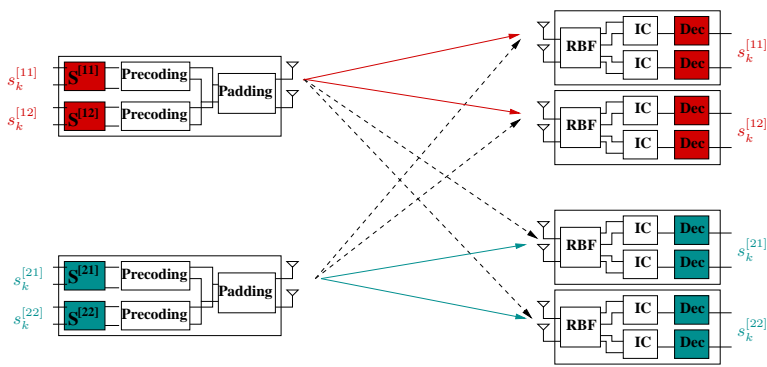}\\
  \caption{System diagram of the proposed transmission method in the two-cell IBC.}\label{fig-IBCStr0}
\end{figure}

\begin{figure}
\centering
  \includegraphics[width=3.5in]{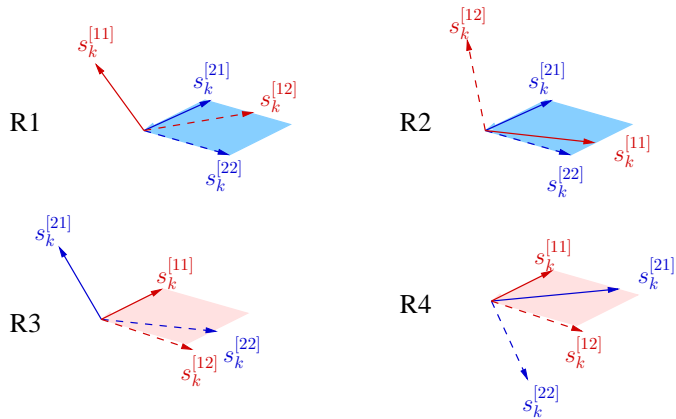}\\
  \caption{Alignment in the six-dimensional receive signal space. Each arrow represents two dimensions carrying $s_k^{[ji]},\ k\in \{1,2\}$. Six interferring symbols are aligned in four-dimensional subspace. The remaining two dimensions are for desired symbols.}\label{fig-align}
\end{figure}

\begin{figure}
\centering
\includegraphics[height=4in,angle=-90]{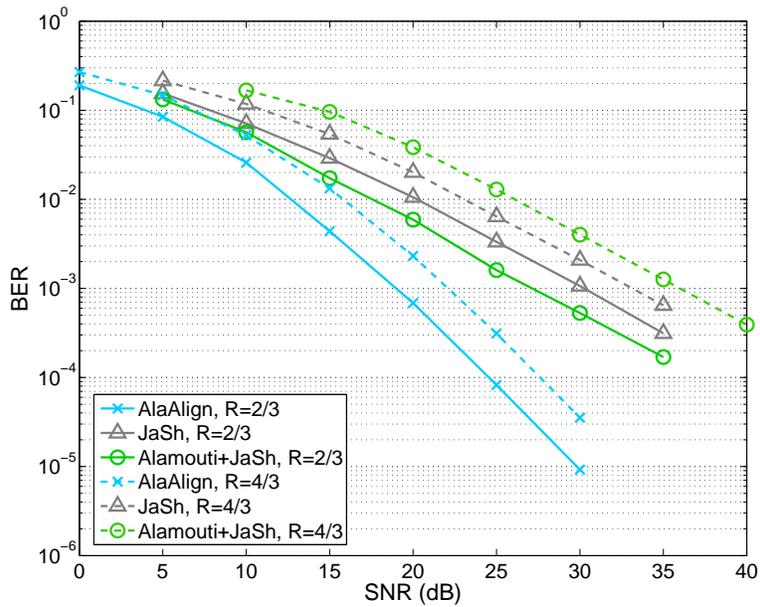}
\caption{BER comparison in the X channels for the proposed method
(labeled as `AlaAlign'), the JaSh scheme, and the modified JaSh
scheme (labeled as `Alamouti+JaSh'). Rate $R$ is measured as bits
per channel use per node pair. } \label{fig-XchanNew}
\end{figure}

\begin{figure}
\centering
  \includegraphics[width=4in]{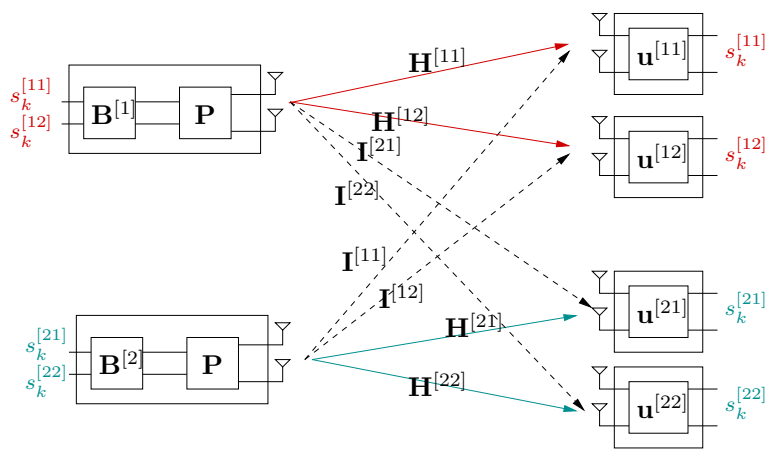}\\
  \caption{System diagram for the extended downlink interference alignment \cite{Suh11} in the two-user double-antenna IBC.}\label{fig-IBCStr}
\end{figure}

\begin{figure}
\centering
\includegraphics[height=4in,angle=-90]{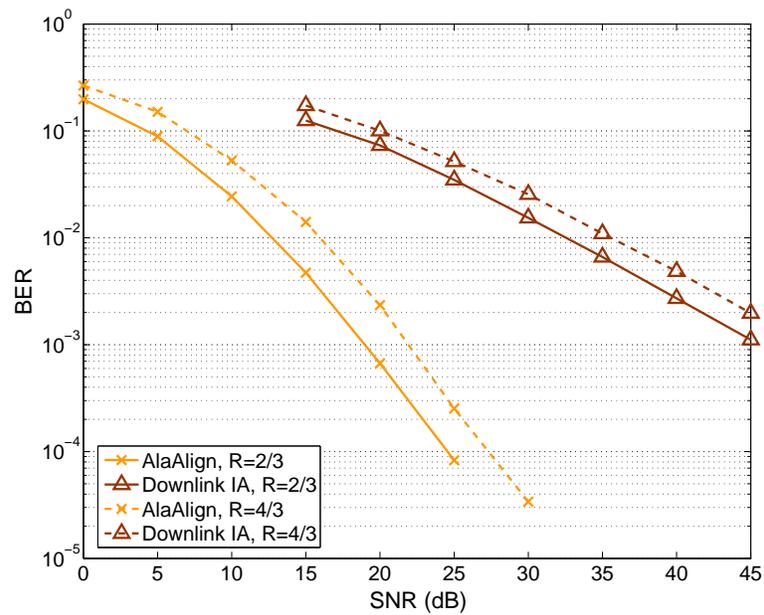}
\caption{BER comparison in the IBC between the proposed method
(labeled as `AlaAlign') and the downlink IA method (labeled as
`Downlink IA'). Rate $R$ is measured as bits per channel use per
receiver.} \label{fig-IBCNew}
\end{figure}

\begin{figure}
\centering
\includegraphics[height=4in,angle=-90]{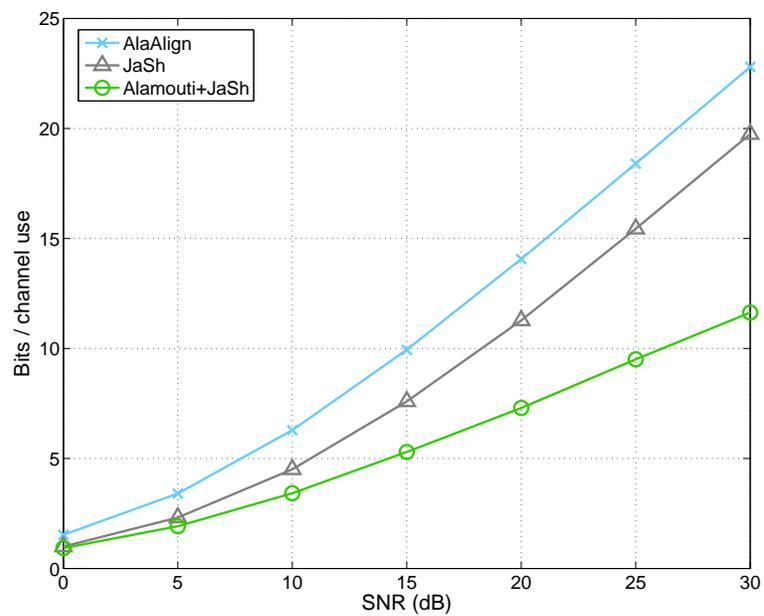}
\caption{Achievable ergodic mutual information in the two-user X
channel.} \label{fig-XCap}
\end{figure}

\begin{figure}
\centering
\includegraphics[height=4in,angle=-90]{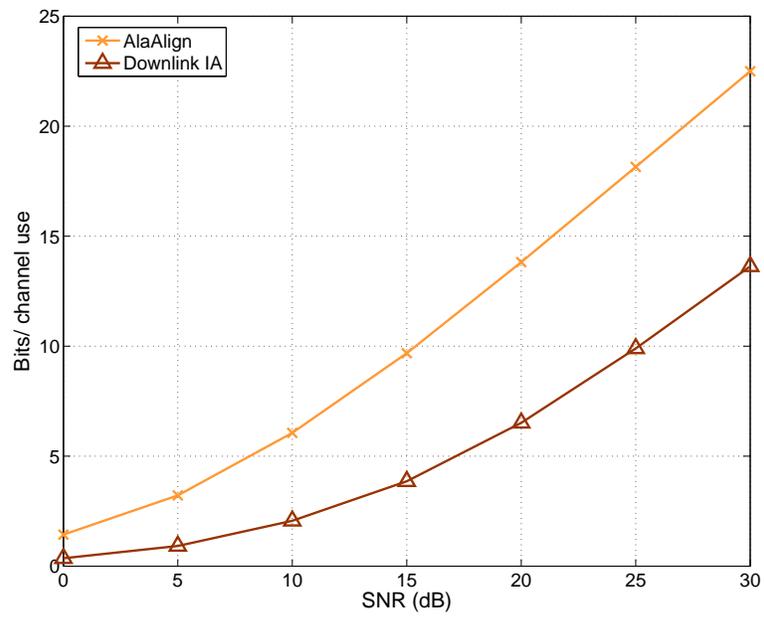}
\caption{Achievable ergodic mutual information in the two-cell IBC.}
\label{fig-IBCCap}
\end{figure}

\end{document}